\newcommand{\trans}[1]{{\tt T}(#1)}
\newcommand{\w}[1]{\ensuremath{\textit{Write}(#1)}}
\newcommand{\pr}[1]{\ensuremath{G\textit{-Read}(#1)}}
\newcommand{\precA}{\ensuremath{\prec_{\mathcal{A}}}}%\prec_{\mathcal{A}}
  \newcommand{\prioA}{\ensuremath{\triangleleft_{\mathcal{A}}}}
\newcommand{\GC}{\ensuremath{\mathbf{GC}}} %(\mathcal{A})}}
\newcommand{\FA}{\ensuremath{A}}%{\ensuremath{F\!A}}
\newcommand{\MG}{\ensuremath{M}}
\newcommand{\HG}{\ensuremath{H}}
\newcommand{\neigh}{\ensuremath{\Gamma}}
\newcommand{\DeltaG}{\ensuremath{\Delta}}
\newcommand{\DeltamGC}{\ensuremath{\mathbf{d}}}
\newcommand{\HGC}{\ensuremath{\mathfrak{H}}}
\newcommand{\suchthat}{\ : \ }
\newtheorem{remark}{Remark}
\newtheorem{lemma}{Lemma}
\newtheorem{corollary}{Corollary}
\newtheorem{theorem}{Theorem}
\newtheorem{definition}{Definition}
\newcommand{\step}{\mapsto}
\renewcommand{\to}{\longmapsto}
\title{Acyclic Strategy for Silent Self-Stabilization in Spanning
  Forests \footnote{This study has been partially supported by the \textsc{anr}
  projects \textsc{Descartes} (ANR-16-CE40-0023) and \textsc{Estate}
  (ANR-16-CE25-0009).}}
\author{\textbf{Karine Altisen, St\'ephane Devismes} \\
{\small\normalfont  Univ. Grenoble Alpes, CNRS, Grenoble
  INP\footnote{Institute of Engineering Univ. Grenoble Alpes},
  VERIMAG, 38000 Grenoble, 
  France } \\ 
{\small\normalfont  Firstname.Lastname@univ-grenoble-alpes.fr}
\\
\textbf{Ana\"is Durand} \\
{\small\normalfont IRISA, Université de Rennes, 35042 Rennes, France}\\
{\small\normalfont Anais.Durand@inria.fr}}
\date{}
\begin{document}

\maketitle

\begin{abstract}
In this paper, we formalize design patterns, commonly used in the
self-stabilizing area, to obtain general statements regarding both
correctness and time complexity guarantees.
Precisely, we study a general class of algorithms designed for
networks endowed with a sense of direction describing a spanning
forest ({\em e.g.}, a directed tree or a network where a directed
spanning tree is available) whose characterization is a simple ({\em
  i.e.}, quasi-syntactic) condition. We show that any algorithm of
this class is (1) silent and self-stabilizing under the distributed
unfair daemon, and (2) has a stabilization time which is polynomial in
moves and asymptotically optimal in rounds. To illustrate the
versatility of our method, we review several existing works where our
results apply.

\bigskip

\paragraph{Keywords:} Self-stabilization, silence, tree networks,
bottom-up actions, and top-down actions.
\end{abstract}

\section{Introduction}

\textit{Self-stabilization}~\cite{D74j} is a versatile technique to
withstand \textit{any finite number} of transient faults in a
distributed system: regardless of the \emph{arbitrary} initial
configuration of the system (and therefore also after the occurrence
of transient faults), a self-stabilizing (distributed) algorithm is
able to recover in finite time a so-called {\em legitimate}
configuration from which its behavior conforms to its specification.

After the seminal work of Dijkstra, many self-stabilizing algorithms
have been proposed to solve various tasks such as spanning tree
constructions~\cite{BPRT10}, token circulations~\cite{HuangC93}, clock
synchronization~\cite{CouvreurFG92}, propagation of information with
feedbacks~\cite{BuiDPV99}.  Those works consider a large taxonomy of
topologies: rings~\cite{MasuzawaK05,BlinT18}, (directed)
trees~\cite{BuiDPV99,chaudhuri-labelling,kdomset-turau}, planar
graphs~\cite{LinC10,GhoshK93}, arbitrary connected
graphs~\cite{DattaGPV01,ACDDP14}, {\em etc.}
Among those topologies, the class of directed (in-) trees ({\em i.e.},
trees where one process is distinguished as the root and edges are
oriented toward the root) is of particular interest.
Indeed, such topologies often appear, at an intermediate level, in
self-stabilizing composite algorithms.  {\em Composition} is a popular
way to design self-stabilizing algorithms~\cite{T01} since it allows
to simplify both the design and proofs.  Numerous self-stabilizing
algorithms, {\em e.g.},~\cite{AGH90,BPRT10,DDHLR16j}, are actually
made as a composition of a spanning directed treelike ({\em e.g.} tree
or forest) construction and some other algorithms specifically
designed for directed tree/forest topologies.
Notice that, even though not mandatory, most of these constructions
achieve an additional property called {\em silence}~\cite{DolevGS96}:
a silent self-stabilizing algorithm converges within finite time to a
configuration from which the values of the communication registers
used by the algorithm remain fixed.
Silence is a desirable property. Indeed, as noted in~\cite{DolevGS96},
the silent property usually implies more simplicity in the algorithm
design, and so allows to write simpler proofs; moreover, a silent
algorithm may utilize less communication operations and communication
bandwidth.

In this paper, we consider the locally shared memory model with
composite atomicity introduced by Dijkstra~\cite{D74j}, which is the
most commonly used model in self-stabilization.  In this model,
executions proceed in (atomic) steps and the asynchrony of the system
is captured by the notion of {\em daemon}. The weakest ({\em i.e.},
the most general) daemon is the {\em distributed unfair
  daemon}. Hence, solutions stabilizing under such an assumption are
highly desirable, because they work under any other daemon assumption.

The daemon assumption and time complexity are closely related. The
{\em stabilization time}, {\em i.e.}, the maximum time to reach a
legitimate configuration starting from an arbitrary one, is the main
time complexity measure to compare self-stabilizing algorithms.  It is
usually evaluated in terms of rounds, which capture the execution time
according to the speed of the slowest process. But, another crucial
issue is the number of local state updates, called {\em
  moves}. Indeed, the stabilization time in moves captures the amount
of computations an algorithm needs to recover a correct behavior.
Now, this latter complexity can be bounded only if the algorithm works
under an unfair daemon.  Actually, if an algorithm requires a stronger
daemon to stabilize, {\em e.g.}, a {\em weakly fair} daemon, then it
is possible to construct executions whose convergence is arbitrary
long in terms of (atomic) steps, meaning that, in such executions,
there are processes whose moves do not make the system progress in the
convergence. In other words, these latter processes waste computation
power and so energy.  Such a situation should be therefore prevented,
making the unfair daemon more desirable than the weakly fair one.

There are many self-stabilizing algorithms proven under the
distributed unfair daemon, {\em
  e.g.},~\cite{ACDDP14,CDDLR15,DLP11,DLV11,GHIJ14}.  However, analyses
of the stabilization time in moves is rather unusual and this may be
an important issue.
Indeed, recently, several self-stabilizing algorithms which work under
a distributed unfair daemon have been shown to have an exponential
stabilization time in moves in the worst case, {\em e.g.}, the silent
leader election algorithms from~\cite{DLP11,DLV11} (as shown
in~\cite{ACDDP14}), the Breadth-First Search (BFS) algorithm of Huang
and Chen~\cite{HC92} (as shown in~\cite{DJ16}), or the silent
self-stabilizing algorithm for the shortest-path spanning tree
of~\cite{GHIJ14} (as shown in~\cite{GHIJ16}).

\paragraph{Contribution.}

In this paper, we formalize design patterns, commonly used in the
self-stabilizing area, to obtain general statements regarding both
correctness and time complexity guarantees. Precisely, we study a general class
of algorithms designed for networks endowed with a sense of direction
describing a spanning forest ({\em e.g.}, a directed tree, or a
network where a directed spanning tree is available) whose
characterization is a simple ({\em i.e.}, quasi-syntactic)
condition. We show that any algorithm of this class is (1) silent and
self-stabilizing under the distributed unfair daemon, and (2) has a
stabilization time which is polynomial in moves and asymptotically
optimal in rounds.

Our condition, referred to as {\em acyclic strategy}, is based on the
notions of {\em top-down} and {\em bottom-up} actions.  Until now,
these types of actions was used rather informally in the context of
self-stabilizing algorithms dedicated to directed trees. Our first
goal has been to formally define these two paradigms. We have then
compiled this formalization together with a notion of acyclic
causality between actions and a last criteria called {\em
  correct-alone} ({\em n.b.}, only this latter criteria is not
syntactic) to obtain the notion of {\em acyclic strategy}. We show
that any algorithm that follows an acyclic strategy reaches a terminal
configuration in a polynomial number of moves, assuming a distributed
unfair daemon. Hence, if its terminal configurations conform to the
specification, then the algorithm is both silent and
self-stabilizing. Unfortunately, we show that our condition is not
sufficient to guarantee a stabilization time that is asymptotically
optimal in rounds, {\em i.e.}, $O(\HG)$ rounds where $\HG$ is the
height of the spanning forest.  However, we propose to enforce our
condition with an extra property, called {\em local mutual
  exclusivity}, which is sufficient to obtain the asymptotic optimal
bound in rounds. Finally, we propose a generic method to add this
latter property to any algorithm that follows an acyclic strategy but
is not locally mutually exclusive, allowing then to obtain a
complexity in $O(\HG)$ rounds. Our method has no overhead in terms of
moves. Finally, to illustrate the versatility of our method, we review
several existing works where our results apply.

\paragraph{Related Work.}

General schemes and efficiency are usually understood as orthogonal
issues.  For example, general
schemes have been proposed~\cite{KP93j,boldi_universal_2002} to
transform almost any algorithm (specifically, those algorithms that
can be self-stabilized) for arbitrary connected and identified
networks into their corresponding stabilizing version.  Such universal
transformers are, by essence, inefficient both in terms of space and
time complexities: their purpose is only to demonstrate the
feasibility of the transformation.  In~\cite{KP93j}, authors consider
asynchronous message-passing systems, while the synchronous locally
shared memory model is assumed in~\cite{boldi_universal_2002}.

However, few works, like~\cite{DT01j,DDT06j,BlinFP14}, target both
general self-stabilizing algorithm patterns and efficiency in rounds.

In \cite{DT01j,DDT06j}, authors propose a method to design silent
self-stabilizing algorithms for a class of fix-point problems (namely
fix-point problems which can be expressed using $r$-operators).  Their
solution works in non-bidirectional networks using bounded memory per
process. In~\cite{DT01j}, they consider the locally shared memory
model with composite atomicity assuming a distributed unfair daemon,
while in~\cite{DDT06j}, they bring their approach to asynchronous
message-passing systems. In both papers, they establish a
stabilization time in $O(D)$ rounds, where $D$ is the network
diameter, that holds for the synchronous case only, moreover move
complexity is not considered.

The remainder of the related work only concerns the locally shared
memory model with composite atomicity assuming a distributed unfair
daemon.

In~\cite{BlinFP14}, authors use the concept of labeling scheme
introduced by Korman {\em et al}~\cite{KKP10j} to design silent
self-stabilizing algorithms with bounded memory per process.  Using
their approach, they show that, every static task has a silent
self-stabilizing algorithm which converges within a linear number of
rounds in an arbitrary identified network, however no move complexity
is given.

To our knowledge, until now, only two works~\cite{CDV09j,DIJ18tr}
conciliate general schemes for stabilization and efficiency in both
moves and rounds. In~\cite{CDV09j}, Cournier {\em et al} propose a
general scheme for snap-stabilizing wave, henceforth non-silent,
algorithms in arbitrary connected and rooted networks. Using their
approach, one can obtain snap-stabilizing algorithms that execute each
wave in polynomial number of rounds and moves. In~\cite{DIJ18tr},
authors propose a general scheme to compute, in a linear number of
rounds, spanning directed treelike data structures on arbitrary
networks. They also exhibit polynomial upper bounds on its
stabilization time in moves holding for large classes of
instantiations of their scheme. Hence, our approach is complementary
to~\cite{DIJ18tr}.

\paragraph{Roadmap.}

The remainder of the paper is organized as follows. In the next
section, we present the computational model and basic definitions.  In
Section~\ref{sec:statement--truc}, we define the notion of {\em
  acyclic strategy} based on the notions of top-down and bottom-up
actions. In Section~\ref{sect:move}, we exhibit a polynomial upper
bound on the move complexity of algorithms that follow an acyclic
strategy.  In Section~\ref{sect:toy}, we propose a simple case
study. This example shows that our upper bound is tight, but in
contrast, the acyclic strategy is not restrictive enough as it allows
degenerated solutions where the stabilization time in rounds is in
$\Omega(n)$ where $n$ is the number of processes in the network. In
Section \ref{sect:rounds}, we show that any algorithm that follows an
acyclic strategy and whose actions are locally mutually exclusive
stabilizes in $O(\HG)$ rounds, where $\HG$ is the height of the
spanning forest; we also show how to add this latter property without
increasing the move complexity. In Section~\ref{sect:appli}, we review
several existing works where our method allows to trivially deduce
both correctness and stabilization time (both in terms of moves and
rounds).  Section \ref{sec:conclusions} is dedicated to concluding
remarks.

%%% Local Variables:
%%% mode: latex
%%% TeX-master: t
%%% End:

\section{Preliminaries}

We consider the {\em locally shared memory model with composite
  atomicity}~\cite{D74j} where processes communicate using locally
shared {\em variables}.

\subsection{Network}
\label{sec:network}

A {\em network} is made of a set of $n$ interconnected {\em
  processes}.  Communications are assumed to be bidirectional. Hence,
we model the topology of the network by a simple undirected graph $G =
(V,E)$, where $V$ is a set of processes and $E$ is a set of edges that
represents communication links, \textit{i.e.}, $\{p, q\}\in E$ means
that $p$ and $q$ can directly exchange information. In this latter
case, $p$ and $q$ are said to be {\em neighbors}.
For a process $p\in V$, we denote by $p.\neigh$ the set of its neighbors:
$p.\neigh = \{ q\in V \suchthat \{p, q\}\in E\}$. We also note $\DeltaG$ the
degree of $G$, namely $\DeltaG = \max \{ |p.\neigh| \suchthat p\in V\}$.

\subsection{Algorithm}
\label{sec:algorithm}

A {\em distributed algorithm} $\mathcal{A}$ is a collection of $n =
|V|$ {\em local algorithms}, each one operating on a single
process: $\mathcal{A} = \{\mathcal{A}(p) \suchthat p \in V \}$ where
each process $p$ is equipped with a local algorithm
$\mathcal{A}(p) = (Var_p, Actions_p)$:
\begin{itemize}
\item $Var_p$ is the finite set of variables of $p$,
\item $Actions_p$ is the finite set of {\em actions} (guarded commands).
\end{itemize}
Notice that $\mathcal{A}$ may not be uniform in the sense that some
local algorithm $\mathcal{A}(p)$ may be different from some other(s).
We identify each variable involved in Algorithm $\mathcal{A}$ by the
notation $p.x\in Var_p$, where $x$ is the \textit{name} of the
variable and $p\in V$ the process that holds it.
Each process $p$ runs its local algorithm $\mathcal{A}(p)$ by
atomically executing actions. If executed, an action of $p$ consists
of reading all variables of $p$ and its neighbors, and then writing
into a part of the \emph{writable} ({\em i.e.}, non-constant)
variables of $p$. Of course, in this case, the written values depend on
the last values read by $p$.
For a process $p\in V$, each action in $Actions_p$ is written as follows
$$L(p)\ ::\ G(p)\ \to\ S(p)$$
$L(p)$ is a {\em label} used to identify the action in the discussion.
The {\em guard} $G(p)$ is a Boolean predicate involving variables of
$p$ and its neighbors. The {\em statement} $S(p)$ is a sequence of
assignments on writable variables of $p$.  
A variable $q.x$ is said to be {\em $G$-read} by $L(p)$ if $q.x$ is
involved in predicate $G(p)$ (in this case, $q$ is either $p$ or one
of its neighbors). Let $\pr{L(p)}$ be the set of variables that are
$G$-read by $L(p)$.
A variable $p.x$ is said to be {\em written} by $L(p)$ if $p.x$
appears as a left operand in an assignment of $S(p)$. Let $\w{L(p)}$
be the set of variables written by $L(p)$.

An action can be executed by a process $p$ only if it is {\em
  enabled}, \textit{i.e.}, its guard evaluates to true.  By extension,
a process is said to be {\em enabled} when at least one of its actions
is enabled.

\subsection{Semantics}
\label{sec:semantics}

The \emph{state} of a process $p\in V$ is a vector of valuations of
its variables and belongs to $\mathcal{C}(p)$, the Cartesian product
of the sets of all possible valuations for each variables of $p$.
A \emph{configuration} of an algorithm $\mathcal{A}$ is a vector made
of a state of each process in $V$. We denote by $\mathcal{C} =
\Pi_{p\in V} \mathcal{C}(p)$ the set of all possible configuration (of
$\mathcal{A}$).
For any configuration $\gamma\in\mathcal{C}$, we denote by $\gamma(p)$
({\em resp.} $\gamma(p).x$) the state of process $p\in V$ (resp. the
value of the variable $x\in Var_p$ of process $p$) in configuration
$\gamma$.

The asynchronism of the system is modeled by an adversary, called the
{\em daemon}. Assume that the current configuration of the system is
$\gamma$.  If the set of enabled processes in $\gamma$ is empty, then
$\gamma$ is said to be {\em terminal}. Otherwise, a {\em step of
  $\mathcal{A}$} is performed as follows: the daemon selects a
non-empty subset $S$ of enabled processes in $\gamma$, and every
process $p$ in $S$ \emph{atomically} executes one of its action
enabled in $\gamma$, leading the system to a new configuration
$\gamma'$. The step (of $\mathcal{A}$) from $\gamma$ to $\gamma'$ is
noted $\gamma\step\gamma'$: $\step$ is the binary relation over
$\mathcal{C}$ defining all possible steps of $\mathcal{A}$ in $G$.
Precisely, in $\gamma\step\gamma'$, for every selected process $p$,
$\gamma'(p)$ is set according to the statement of the action executed
by $p$ based on the values it G-reads on $\gamma$, whereas $\gamma'(q)
= \gamma(q)$ for every non-selected process $q$.

An {\em execution\/} of $\mathcal{A}$ is a maximal sequence
$\gamma_0 \gamma_1 ... \gamma_i ...$ of configurations of
$\mathcal{C}$ such that $\gamma_{i-1}\step\gamma_i$ for all
$i>0$. The term ``maximal'' means that the execution is either
infinite, or ends at a {\em terminal\/} configuration.

Recall that executions are driven by a daemon. We define a daemon
$\mathcal D$ as a predicate over executions. An execution $e$ is then
said to be an {\em execution under the daemon $\mathcal D$} if $e$
satisfies $\mathcal D$.  In this paper, we assume that the daemon is
{\em distributed} and {\em unfair}. ``Distributed'' means that, unless
the configuration is terminal, the daemon selects at least one enabled
process (maybe more) at each step. ``Unfair'' means that there is no
fairness constraint, \textit{i.e.}, the daemon might never select a
process unless it is the only enabled one.

\subsection{Time Complexity}

We measure the time complexity of an algorithm using two notions: {\em
  rounds}~\cite{DIM93} and {\em moves}~\cite{D74j}.
The complexity in {\em rounds} evaluates the execution
time according to the speed of the slowest processes.
The definition of round uses the concept of {\em neutralization}: a
process~$v$ is \textit{neutralized} during a step~$\gamma_i\step
\gamma_{i+1}$, if~$v$ is enabled in~$\gamma_i$ but not in
configuration~$\gamma_{i+1}$, and it is not activated in the
step~$\gamma_i\step\gamma_{i+1}$.  Then, the rounds are inductively
defined as follows. The first round of an execution~$e =
\gamma_0\gamma_1...$ is its minimal prefix~$e'$ such that every
process that is enabled in~$\gamma_0$ either executes an action or is
neutralized during a step of~$e'$. If $e'$ is finite, then the second
round of $e$ is the first round of the suffix
$\gamma_t\gamma_{t+1}...$ of~$e$ starting from the last configuration
$ \gamma_t$ of $e'$, and so forth.
The complexity in {\em moves} captures the amount of computations an
algorithm needs. Indeed, we say that a process {\em moves} in
$\gamma_i\step\gamma_{i+1}$ when it executes an action in
$\gamma_i\step\gamma_{i+1}$.

\subsection{Silent Self-Stabilization and Stabilization Time}
\begin{definition}
[Silent Self-Stabilization~\cite{DGS99}]
\label{def:selfstabilization}
Let $\mathcal{A}$ be a distributed algorithm for a network $G$, $SP$ a
predicate over the configurations of $\mathcal{A}$, and $\mathcal D$ a
daemon.  $\mathcal{A}$ is {\em silent and self-stabili\-zing for~$SP$
  in $G$ under $\mathcal D$} if the following two conditions hold:
\begin{itemize}
\item Every execution of $\mathcal{A}$ under~$\mathcal D$ is finite, and
\item every terminal configuration of $\mathcal{A}$ satisfies $SP$.
\end{itemize}
In this case, every terminal (resp. non-terminal)
configuration is said to be {\em legitimate} {\em
  w.r.t.} $SP$, (resp. {\em illegitimate} {\em
  w.r.t.} $SP$).
\end{definition}
The {\em stabilization time} in rounds (resp. moves) of a silent
self-stabilizing algorithm is the maximum number of rounds
(resp. moves) over every execution possible under the considered
daemon (starting from any initial configuration) to reach a terminal
(legitimate) configuration.

\section{Algorithm with Acyclic Strategy}
\label{sec:statement--truc}

In this section, we define a class of algorithm, the distributed
algorithms that {\em follow an acyclic strategy} and we study their
correctness and time complexity.  Let $\mathcal{A}$ be a distributed
algorithm running on some network $G = (V, E)$.

\subsection{Variable Names}
\label{sec:variable-names}

We assume that every process is endowed with the same set of variables
and we denote by $Names$ the set of names of those variables, namely:
$Names = \{ x \suchthat p \in V \wedge p.x\in Var_p\}$.  We also
assume that for every name $x\in Names$, for all processes $p$ and
$q$, variables $p.x$ and $q.x$ have the same definition domain.
The set of names is partitioned into two subsets: $ConstNames$, the
set of constant names, and $VarNames = Names \setminus ConstNames$,
the set of writable variable names. A name $x$ is in $VarNames$ as
soon as there exists a process $p$ such that $p.x \in Var_p$ and $p.x$
is written by an action of its local algorithm $\mathcal{A}(p)$.
For every $c \in ConstNames$ and every process $p\in V$, $p.c$ is
never written by any action and it has a pre-defined constant value
(which may differ from one process to another, {\em e.g.}, $\neigh$,
the name of the neighborhood).

We assume that $\mathcal A$ is {\em well-formed}, {\em i.e.},
$VarNames$ can be partitioned into $k$ sets $Var_1,$ $...,$ $Var_k$
such that $\forall p \in V$, $\mathcal A(p)$ consists of exactly $k$
actions $\FA_1(p), ..., \FA_k(p)$ such that $\w{\FA_i(p)} = \{p.v
\suchthat v \in Var_i\}$, for all $i \in \{1, ..., k\}$.  Let $\FA_i =
\{ \FA_i(p) \suchthat p \in V\}$, for all $i \in \{1, ..., k\}$.
Every $A_i$ is called a {\em family (of actions)}. By definition,
$\FA_1, ..., \FA_k$ is a partition over all actions of $\mathcal{A}$,
henceforth called a {\em families' partition}.

\begin{remark}\label{rem:wf}
  Since $\mathcal A$ is assumed to be {\em well-formed}, there is
  exactly one action of $\mathcal{A}(p)$ where $p.v$ is written, for
  every process $p$ and every writable variable $p.v$ (of $p$).
\end{remark}

\subsection{Spanning Forest}
\label{sec:spanning-forest}

In this work, we assume that every process is endowed with constant
variables that define a spanning forest over the graph $G$. Precisely,
we assume the constant names $parent, children\in ConstNames$ such that
for every process $p\in V$, $p.parent$ and $p.children$ are preset
as follows:
\begin{itemize}
\item $p.parent\in p.\neigh\cup\{\bot\}$: $p.parent$ is either a
  neighbor of $p$ (its {\em parent} in the forest), or $\bot$. In this
  latter case, $p$ is called a {\em (tree) root}.

  Hence, the graph made of vertices $V$ and edges $\{ (p, p.parent)
  \suchthat p\in V \wedge p.parent\neq\bot \}$ is assumed to be a
  spanning forest of $G$.
\item $p.children\subseteq p.\neigh$: $p.children$ contains the
  neighbors of $p$ which are the {\em children} of $p$ in the forest,
  {\em i.e.}, for every $p, q\in V$, $p.parent = q \iff p\in
  q.children$.

  Notice that the latter constraint implies that the graph made of
  vertices $V$ and edges $\{ (q, p) \suchthat p\in V \wedge q\in
  p.children \}$ is also a spanning forest of $G$.

  If $p.children = \emptyset$, then $p$ is called a {\em leaf}.
\end{itemize}
Note that $p.\neigh \setminus (\{p.parent\} \cup p.children)$ may
not be empty.
The set of $p$'s {\em ancestors}, $Ancestors(p)$, is recursively
defined as follows:
\begin{itemize}
\item $Ancestors(p) = \{p\}$ {\bf if} $p$ is a root,
\item $Ancestors(p) = \{p\} \cup Ancestors(p.parent)$ {\bf otherwise}.
\end{itemize}
Similarly, the set of $p$'s {\em descendants}, $Descendants(p)$, is
recursively defined as follows:
\begin{itemize}
\item $Descendants(p) = \{p\}$ {\bf if} $p$ is a leaf,
\item $Descendants(p) = \{p\} \cup \bigcup_{q \in p.children}
  Descendants(q)$ {\bf otherwise}.
\end{itemize}

\subsection{Acyclic Strategy}
\label{sec:bottom-up-top-down}

Let $\FA_1, ..., \FA_k$ be the families' partition of $\mathcal
A$. $\FA_i$, with $i \in \{1, ..., k\}$, is said to be {\em correct-alone} if
for every process $p$ and every step $\gamma \mapsto \gamma'$ such
that $\FA_i(p)$ is executed in $\gamma \mapsto \gamma'$, if no
variable in $\pr{A_i(p)} \setminus \w{\FA_i(p)}$ is modified in
$\gamma \mapsto \gamma'$, then $\FA_i(p)$ is disabled in $\gamma'$.
Notice that if a variable in $\w{\FA_i(p)}$ is modified in $\gamma
\mapsto \gamma'$, then it is necessarily modified by $\FA_i(p)$, by
Remark~\ref{rem:wf}.

Let $\precA$ be a binary relation over the families of actions of
$\mathcal A$ such that for $i, j\in\{1, ..., k\}$, $\FA_j \precA
\FA_i$ if and only if $i \neq j$ and there exist two processes $p$ and
$q$ such that $q \in p.\neigh\cup\{p\}$ and $\w{\FA_j(p)} \cap \pr{\FA_i(q)} \neq \emptyset$. We
conveniently represent the relation $\precA$ by a directed graph $\GC$
called {\em Graph of actions' Causality} and 
defined as follows: 
$\GC = (\{\FA_1, ..., \FA_k\}, \{(\FA_j, \FA_i), \FA_j \precA
\FA_i\})$.

Intuitively, a family of actions $\FA_i$ is top-down if activations of
its corresponding actions are only propagated down in the forest, {\em
  i.e.}, when some process $q$ executes action $\FA_i(q)$, $\FA_i(q)$
can only activate $\FA_i$ at some of its children $p$, if any. In this
case, $\FA_i(q)$ writes to some variables G-read by $\FA_i(p)$, these
latter are usually G-read to be compared to variables written by
$\FA_i(p)$ itself. In other words, a variable G-read by $\FA_i(p)$ can
be written by $\FA_i(q)$ only if $q = p$ or $q = p.parent$.
Formally, a family of actions $\FA_i$ is said to be {\em top-down} if
for every process $p$
%, every $q \in p.\neigh\cup\{p\}$, and every $q.v \in
%Var_q$, we have 
 and every $q.v \in \pr{\FA_i(p)}$, we have
% \begin{enumerate}
% \item either  $v \in ConstNames$,
% \item $q = p$,
% \item $q = p.parent$ or
% \item there exists $j\in\{1, ..., k\}$ such that $j \neq i$ and
%   $q.v\in\w{\FA_j(q)}$ ({\em i.e.}, $\FA_j \precA \FA_i$).
% %$v \in \FA_j \wedge \FA_j \precA \FA_i$.
% \end{enumerate}
% REMPLACER par : 
$q.v\in\w{\FA_i(q)}  \Rightarrow q \in \{p,p.parent\}$.

Intuitively, a family of actions $\FA_i$ is bottom-up if activations
of its corresponding actions are only propagated up in the forest,
{\em i.e.} when some process $q$ executes action $\FA_i(q)$,
$\FA_i(q)$ can only activate $\FA_i$ at its parent $p$, if any. In
this case, $\FA_i(q)$ writes to some variables G-read by $\FA_i(p)$,
these latter are usually G-read to be compared to variables written by
$\FA_i(p)$ itself. In other words, a variable G-read by $\FA_i(p)$
can be written by $\FA_i(q)$ only if $q = p$ or $q \in p.children$.
Hence, a family $\FA_i$ is said to be {\em bottom-up} if for every
process $p$
%, every $q \in p.\neigh\cup\{p\}$, and every $q.v \in
%Var_q$, we have 
 and every $q.v \in \pr{\FA_i(p)}$, we have 
% \begin{enumerate}
% \item either  $v \in ConstNames$,
% \item $q = p$, or
% \item $q \in p.children$
% \item there exists $j\in\{1, ..., k\}$ such that $j \neq i$ and
%   $q.v\in\w{\FA_j(q)}$ ({\em i.e.}, $\FA_j \precA \FA_i$).
% %$v \in \FA_j \wedge \FA_j \precA \FA_i$.
% \end{enumerate}
% REMPLACER par : 
$q.v\in\w{\FA_i(q)}  \Rightarrow q \in p.children \cup \{p\}$.

A distributed algorithm $\mathcal A$ follows an {\em acyclic strategy}
if it is well-formed, its graph of actions' causality $\GC$
is {\em acyclic}, and for every $\FA_i$ in its families' partition,
$\FA_i$ is correct-alone and either bottom-up or top-down.

\section{Move Complexity of Algorithms with Acyclic Strategy}
\label{sect:move}

In this section, we exhibit a polynomial upper bound on the move
complexity of any algorithm that follows an acyclic strategy.
Throughout this section, we consider a distributed algorithm $\mathcal
A$ which follows an {\em acyclic strategy} and runs on the network $G
= (V,E)$. We use the same notation as in the previous section, in
particular, we let $\FA_1, ..., \FA_k$ be the families' partition of
$\mathcal A$.

%%%%
\subsection{Definitions}
Let $p$ be a process and $\FA_i, i\in\{1, ..., k\}$ a family of
actions.

We define the {\em impacting zone} of $p$ and $\FA_i$, noted
$Z(p,\FA_i)$, as follows:
\begin{itemize} 
\item $Z(p,\FA_i) =  Ancestors(p)$ 
  {\bf if} $\FA_i$ is top-down,
\item $Z(p,\FA_i) = Descendants(p)$ {\bf otherwise} ({\em i.e.},
  $\FA_i$ is bottom-up).
\end{itemize}

\begin{remark}\label{rem:z}
  By definition, we have $1 \leq |Z(p,\FA_i)| \leq n$. Moreover, if
  $\FA_i$ is top-down, then we have
  $1 \leq |Z(p,\FA_i)| \leq H + 1 \leq n$, where $\HG$ is the height
  of $G$, {\em i.e.}, the maximum among the heights\footnote{The
    height of $p$ in $G$ is 0 if $p$ is a leaf. Otherwise the height
    of $p$ in $G$ is equal to one plus the maximum among the heights
    of its children.}  of the roots of all trees of the forest
\end{remark}

We also define the quantity $\MG(\FA_i,p)$ as:
\begin{itemize} 
\item the \emph{level}\footnote{The level of $p$ in $G$ is the
    distance from $p$ to the root of its tree in $G$ (0 if $p$ is the
    root itself).} of $p$ in $G$ {\bf if} $\FA_i$ is top-down,
\item the \emph{height} of $p$
  in $G$ {\bf otherwise} ({\em i.e.}, $\FA_i$ is bottom-up).
\end{itemize}

\begin{remark}\label{rem:mg}
  By definition, we have $0 \leq \MG(\FA_i,p) \leq \HG$, where $\HG$
  is the height of $G$.%, {\em i.e.}, the maximum among the heights of
%  the roots of all trees of the forest.
\end{remark}

We define 
$$
Others(\FA_i,p) = \{q \in p.\neigh \suchthat \exists \FA_j, i
\neq j \wedge \w{\FA_j(q)} \cap \pr{\FA_i(p)} \neq \emptyset\}
$$
the set of neighbors $q$ of $p$ that have actions other than $\FA_i(q)$ which write
variables that are G-read by $\FA_i(p)$. We also note:
$$
maxO(\FA_i) = \max ( \{|Others(\FA_i, p)| \suchthat p \in V\} \cup
\{maxO(\FA_j) \suchthat \FA_j \precA \FA_i)\} )
$$

\begin{remark}\label{rem:maxO}
  By definition, we have $maxO(\FA_i) \leq \DeltaG$.
  % 
  % Moreover, if $\forall p \in V, \forall i\in\{1, ..., k\}, \forall
  % q.v \in \pr{\FA_i(p)}$, $v \in \FA_i \vee q = p$, then $\forall \FA_i,
  % maxO(\FA_i) = 0$.
  Moreover, if $\forall p \in V$, $\forall i\in\{1, ..., k\}$,
  $Others(\FA_i, p)$ is empty, {\em i.e.}, no neighbor $q$ of $p$
  writes into a variable read by $\FA_i(p)$ using an action other than
  $\FA_i(q)$, then $\forall j\in \{1, ..., k\}$, $maxO(\FA_j) = 0$.
\end{remark}

\subsection{Stabilization Time in Moves}

\begin{lemma}\label{lem:moves}
  Let $\FA_i$ be a family of actions and $p$ be a process.  For every
  execution $e$ of the algorithm $\mathcal A$ on $G$, we have
  $$
  \#m(e,\FA_i,p) \leq \bigg(n . \big(1 + \DeltamGC . \big(1 +
  maxO(\FA_i)\big)\big)\bigg)^{\HGC(\FA_i)} . |Z(p,\FA_i)|
  $$
  where $\#m(e,\FA_i,p)$ is the number of times $p$ executes
  $\FA_i(p)$ in $e$, $\DeltamGC$ is the in-degree of
  $\GC$,\footnote{$\DeltamGC = \max \{ |\{\FA_j \suchthat
    \FA_j\precA\FA_i\}| \suchthat i\in\{1,
    ..., k\}\}$.} and $\HGC(\FA_i)$ is the height of $\FA_i$ in
  $\GC$.\footnote{The height of $\FA_i$ in $\GC$ is 0 if $\FA_i$ is a
    leaf of $\GC$. Otherwise, it is equal to one plus the maximum of
    the heights of the  $\FA_i$'s predecessors {\em w.r.t.}
    $\precA$.}
\end{lemma}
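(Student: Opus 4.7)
The plan is to prove the bound by nested induction: an outer induction on $\HGC(\FA_i)$, and, within each outer step, a secondary induction on $\MG(\FA_i, p)$. The combinatorial backbone is a counting observation extracted from the correct-alone property: between any two consecutive executions of $\FA_i(p)$ in $e$, at least one variable in $\pr{\FA_i(p)} \setminus \w{\FA_i(p)}$ must be modified, so $\#m(e, \FA_i, p) - 1$ is bounded by the number of moves in $e$ that write such a variable. Each such modifying move $\FA_j(q)$ has $q \in p.\neigh \cup \{p\}$, and splits into two cases: either $j = i$, in which case $q \neq p$ (since $\w{\FA_i(p)}$ is excluded) and the top-down (resp.\ bottom-up) character of $\FA_i$ forces $q = p.parent$ (resp.\ $q \in p.children$); or $j \neq i$ with $\w{\FA_j(q)} \cap \pr{\FA_i(p)} \neq \emptyset$, in which case $\FA_j \precA \FA_i$ by the very definition of $\precA$.

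Denoting by $R(\FA_i, p)$ the set $\{p.parent\}$ (top-down, $p$ not a root) or $p.children$ (bottom-up), this classification yields the master recurrence
$$
\#m(e, \FA_i, p) \leq 1 + \sum_{q \in R(\FA_i, p)} \#m(e, \FA_i, q) + \sum_{\FA_j \precA \FA_i} \sum_{q} \#m(e, \FA_j, q),
$$
where the inner $q$-sum ranges over $p$ together with the neighbors of $p$ witnessing $\w{\FA_j(q)} \cap \pr{\FA_i(p)} \neq \emptyset$. The index set of the double sum has size at most $\DeltamGC \cdot (1 + maxO(\FA_i))$: the factor $\DeltamGC$ bounds the number of $\FA_j \precA \FA_i$, and $1 + maxO(\FA_i)$ bounds the eligible $q$ per such $\FA_j$, the $+1$ accounting for $q = p$.

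For the outer base case $\HGC(\FA_i) = 0$, no $\FA_j \precA \FA_i$ exists, the double sum vanishes, and a straightforward secondary induction on $\MG(\FA_i, p)$ using the tree identities $|Descendants(p)| = 1 + \sum_{q \in p.children} |Descendants(q)|$ and $|Ancestors(p)| = 1 + |Ancestors(p.parent)|$ (for non-root $p$) yields $\#m(e, \FA_i, p) \leq |Z(p, \FA_i)|$, matching the claim since the outer factor is $1$. For the inductive step, I apply the outer IH to each $\FA_j \precA \FA_i$ together with the crude estimates $\HGC(\FA_j) \leq \HGC(\FA_i) - 1$, $maxO(\FA_j) \leq maxO(\FA_i)$, and $|Z(q, \FA_j)| \leq n$, obtaining $\#m(e, \FA_j, q) \leq n \cdot C^{\HGC(\FA_i) - 1}$ with $C := n (1 + \DeltamGC (1 + maxO(\FA_i)))$. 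Substituting into the master recurrence bounds the double sum by the constant $T := \DeltamGC (1 + maxO(\FA_i)) \cdot n \cdot C^{\HGC(\FA_i) - 1}$, and a second induction on $\MG(\FA_i, p)$, analogous to the base case but carrying the extra additive $1 + T$ at the $\MG = 0$ root/leaf, together with the same tree identities and the easy arithmetic $1 + T \leq C^{\HGC(\FA_i)}$ (which reduces to $1 \leq n \cdot C^{\HGC(\FA_i) - 1}$) yields the claimed bound $\#m(e, \FA_i, p) \leq C^{\HGC(\FA_i)} \cdot |Z(p, \FA_i)|$.

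The main obstacle I anticipate is the bookkeeping in the master recurrence: one must cleanly charge every modification of a variable in $\pr{\FA_i(p)} \setminus \w{\FA_i(p)}$ to either a same-family move at one of the $\FA_i$-relevant neighbors in $R(\FA_i, p)$, or to a move of a $\precA$-predecessor of $\FA_i$ at $p$ itself or at one of the at most $maxO(\FA_i)$ eligible neighbors, without double-counting and while keeping the right cardinality bounds in terms of $\DeltamGC$ and $maxO(\FA_i)$. Once that accounting is correctly set up, the two nested inductions, the tree identities, and the arithmetic check make the remainder routine.
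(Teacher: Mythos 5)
Your proposal is correct and takes essentially the same route as the paper: the paper merely merges your two nested inductions into a single induction on the combined measure $K(\FA_i,p)=\MG(\FA_i,p)+(\HG + 1)\cdot\HGC(\FA_i)$, which linearizes your lexicographic order on $(\HGC(\FA_i),\MG(\FA_i,p))$, and its three-case analysis of which move can re-enable $\FA_i(p)$, the disjoint-union identity for $Z(p,\FA_i)$, and the closing arithmetic coincide with your master recurrence and its resolution.
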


\begin{proof}
  Let $e = \gamma_0...\gamma_x...$ be any execution of
  $\mathcal A$ on $G$.  \\
  Let $K(\FA_i,p) = \MG(\FA_i,p) + (\HG + 1).\HGC(\FA_i)$.
  We proceed by induction on $K(\FA_i,p)$.

  \begin{description}
  \item[Base Case:] Assume $K(\FA_i,p) = 0$ for some family $\FA_i$
    and some process $p$.  By definition, $\HG \geq 0$, $\HGC(\FA_i)
    \geq 0$ and $\MG(\FA_i,p) \geq 0$. Hence, $K(\FA_i,p) = 0$ implies
    that $\HGC(\FA_i) = 0$ and $\MG(\FA_i,p) = 0$.  Since
    $\MG(\FA_i,p) = 0$, $Z(p,\FA_i) = \{p\}$.  So, since $\FA_i$ is
    top-down or bottom-up, for every $q.v \in \pr{\FA_i(p)}$,
    $q.v\in\w{\FA_i(q)} \Rightarrow q = p$.  Moreover, since
    $\HGC(\FA_i) = 0$, $\forall j \neq i$, $\FA_j \not\precA
    \FA_i$. So, for every $j \neq i$ and every $q \in
    p.\neigh\cup\{p\}$, $\w{\FA_j(p)} \cap \pr{\FA_i(q)} = \emptyset$.
    Hence, no action except $\FA_i(p)$ can modify a variable in
    $\pr{A_i(p)}$. Thus, $\#m(e,\FA_i,p) \leq 1$ since $\FA_i$ is
    correct-alone.

  \item[Induction Hypothesis:] Let $K \geq 0$. Assume that for every
    family $\FA_j$ and every process $q$ such that $K(\FA_j,q) \leq K$, we
    have 
    $$
    \#m(e,\FA_j,q) \leq \bigg(n . \big(1 + \DeltamGC . \big(1 +
    maxO(\FA_j)\big)\big)\bigg)^{\HGC(\FA_j)} . |Z(q,\FA_j)|
    $$

  \item[Induction Step:] Assume that for some family $\FA_i$ and some
    process $p$, $K(\FA_i,p) = K+1$.  If $\#m(e,\FA_i,p)$ equals 0 or
    1, then the result trivially holds.
    Assume now that $\#m(e,\FA_i,p) > 1$ and consider two consecutive
    executions of $\FA_i(p)$ in $e$, {\em i.e.}, there exist $x,y$
    such that $0 \leq x < y$, $\FA_i(p)$ is executed in both $\gamma_x
    \mapsto \gamma_{x+1}$ and $\gamma_y \mapsto \gamma_{y+1}$, but not
    in steps $\gamma_z \mapsto \gamma_{z+1}$ with $z \in \{x+1, ...,
    y-1\}$.
    Then, since $\FA_i$ is correct-alone, at least one variable in
    $\pr{\FA_i(p)}$ has to be modified by an action other than
    $\FA_i(p)$ in a step $\gamma_z \mapsto \gamma_{z+1}$ with $z \in
    \{x, ..., y-1\}$ so that $\FA_i(p)$ becomes enabled again. Namely,
    there are $j \in \{1, ..., k\}$ and $q \in V$ such that {\em (a)
      $j \neq i$ or $q \neq p$}, $\FA_j(q)$ is executed in a step
    $\gamma_z \mapsto \gamma_{z+1}$, and $\w{\FA_j(q)} \cap
    \pr{\FA_i(p)} \neq \emptyset$. Note also that, by definition, {\em
      (b) $q \in p.\neigh \cup \{p\}$}. Finally, by definitions of
    top-down and bottom-up, {\em (a)}, and {\em (b)}, $\FA_j(q)$
    satisfies: (1) $j \neq i \wedge q = p$, (2) $j = i \wedge q \in
    p.\neigh \cap Z(p,\FA_i)$, or (3) $j \neq i \wedge q \in
    p.\neigh$. In other words, at least one of the three
    following cases occurs:
    \begin{enumerate}
    \item {\em $p$ executes $\FA_j(p)$ in step $\gamma_z \mapsto \gamma_{z+1}$
      with $j \neq i$ and $\w{\FA_j(p)} \cap \pr{\FA_i(p)} \neq
      \emptyset$.}
      
      Consequently, $\FA_j \precA \FA_i$ and, so, $\HGC(\FA_j) <
      \HGC(\FA_i)$.  Moreover, $\MG(\FA_j,p) - \MG(\FA_i,p) \leq \HG$
      and $\HGC(\FA_j) < \HGC(\FA_i)$ imply $K(\FA_j,p) < K(\FA_i,p) =
      K+1$.  Hence, by induction hypothesis, we have:
      $$
      \#m(e,\FA_j,p) \leq \bigg(n . \big(1 + \DeltamGC . \big(1 +
      maxO(\FA_j)\big)\big)\bigg)^{\HGC(\FA_j)} . |Z(p,\FA_j)|
      $$

    \item  {\em There is $q \in p.\neigh \cap Z(p,\FA_i)$ such that $q$ executes $\FA_i(q)$ in step $\gamma_z \mapsto
      \gamma_{z+1}$ and $\w{\FA_i(q)} \cap \pr{\FA_i(p)} \neq
      \emptyset$.} 

Then,  $\MG(\FA_i,q) <
      \MG(\FA_i,p)$. 
% {\em a neighbor $q$ of $p$ satisfying $\MG(\FA_i,q) <
%       \MG(\FA_i,p)$ executes $\FA_i(q)$ in step $\gamma_z \mapsto
%       \gamma_{z+1}$ and $\w{\FA_i(q)} \cap \pr{\FA_i(p)} \neq
%       \emptyset$.}  
      Since $\MG(\FA_i,q) < \MG(\FA_i,p)$, $K(\FA_i,q) < K(\FA_i,p) =
      K+1$ and, by induction hypothesis, we have: 
      $$
      \#m(e,\FA_i,q) \leq \bigg(n . \big(1 + \DeltamGC . \big(1 +
      maxO(\FA_i)\big)\big)\bigg)^{\HGC(\FA_i)} . |Z(q,\FA_i)|
      $$
      
    \item {\em A neighbor $q$ of $p$ executes an action $\FA_j(q)$ in step
      $\gamma_z \mapsto \gamma_{z+1}$, with $j \neq i$ and $\w{\FA_j(q)}
      \cap \pr{\FA_i(p)} \neq \emptyset$.}
      
      Consequently, $q \in Others(\FA_i,p)$ and $\FA_j \precA \FA_i$
      and, so, $\HGC(\FA_j) < \HGC(\FA_i)$.  Moreover, $\MG(\FA_j,q) -
      \MG(\FA_i,p) \leq \HG$ and $\HGC(\FA_j) < \HGC(\FA_i)$ imply
      $K(\FA_j,q) < K(\FA_i,p) = K+1$. Hence, by induction hypothesis,
      we have: 
      $$
      \#m(e,\FA_j,q) \leq \bigg(n . \big(1 + \DeltamGC . \big(1 +
      maxO(\FA_j)\big)\big)\bigg)^{\HGC(\FA_j)} . |Z(q,\FA_j)|
      $$
    \end{enumerate}
    (Notice that Cases 1 and 3 can only occur when $\HGC(\FA_i) > 0$.)

    We now bound the number of times each of the three above cases
    occur in the execution $e$.
    \begin{description}
    \item[Case 1:] By definition, there exist at most $\DeltamGC$
      predecessors $\FA_j$ of $\FA_i$ in $\GC$ (\textit{i.e.}, such
      that $\FA_j \precA \FA_i$). For each of them, we have
      $\HGC(\FA_j) < \HGC(\FA_i)$, $|Z(p,\FA_j)| \leq n$ (by
      Remark~\ref{rem:z}) and $maxO(\FA_j) \leq maxO(\FA_i)$. Hence,
      overall this case appears at most
      $\sum_{\{\FA_j \suchthat \FA_j\precA \FA_i\}} \#m(e,\FA_j,p)$
      \begin{align}
        \nonumber
        & \leq &  \sum_{\{\FA_j \suchthat \FA_j\precA \FA_i\}} 
        \bigg(n . \big(1 + \DeltamGC . \big(1 +
        maxO(\FA_j)\big)\big)\bigg)^{\HGC(\FA_j)} . |Z(p,\FA_j)|\\ 
        \nonumber
        & \leq &  \sum_{\{\FA_j \suchthat \FA_j\precA \FA_i\}} 
        \bigg(n . \big(1 + \DeltamGC . \big(1 +
        maxO(\FA_i)\big)\big)\bigg)^{\HGC(\FA_i)-1} . n\\ 
        \nonumber
        & \leq & \DeltamGC
        \bigg(n . \big(1 + \DeltamGC . \big(1 +
        maxO(\FA_i)\big)\big)\bigg)^{\HGC(\FA_i)-1} . n\\ 
        & \leq & \DeltamGC . n^{\HGC(\FA_i)} .
        \big(1 + \DeltamGC . \big(1 +
        maxO(\FA_i)\big)\big)^{\HGC(\FA_i)-1}
      \end{align}
    \item[Case 2:] By definition,
      $$Z(p,\FA_i) = \{p\} \uplus \biguplus_{q \in p.\neigh \cap Z(p,\FA_i)} Z(q,\FA_i)$$ Hence, overall this case appears
      at most
      $\sum_{q \in p.\neigh \cap Z(p,\FA_i)} \#m(e,\FA_i,q)$
      \begin{align}
        \nonumber 
        & \leq & \sum_{q \in p.\neigh \cap Z(p,\FA_i)} \bigg(n . \big(1 + \DeltamGC . \big(1 +
        maxO(\FA_i)\big)\big)\bigg)^{\HGC(\FA_i)} . |Z(q,\FA_i)| \\
        & \leq & n^{\HGC(\FA_i)} .  \big(1 + \DeltamGC . \big(1 +
        maxO(\FA_i)\big)\big)^{\HGC(\FA_i)} . \big(|Z(p,\FA_i)|
        -1\big)
      \end{align}
    \item[Case 3:] Again, for every $\FA_j \precA \FA_i$, we have
      $\HGC(\FA_j) < \HGC(\FA_i)$, $maxO(\FA_j) \leq maxO(\FA_i)$,
      and $Z(q,\FA_j) \leq n$ (Remark \ref{rem:z}). By
      definition, there are at most $\DeltamGC$ families $\FA_j$ such
      that $\FA_j \precA \FA_i$. Finally, $|Others(\FA_i,p)| \leq
      maxO(\FA_i)$, by definition. 
      Hence, overall this case appears at most

      $\sum_{\{\FA_j \suchthat \FA_j\precA \FA_i\}} \sum_{\{q \in Others(\FA_i,
        p)\}} \#m(e,\FA_j,q)$
      \begin{align} 
        \nonumber
        & \leq & \sum_{\{\FA_j \suchthat \FA_j\precA \FA_i\}} \sum_{\{q \in Others(\FA_i,
          p)\}}
        \bigg(n . \big(1 + \DeltamGC . \big(1 +
        maxO(\FA_j)\big)\big)\bigg)^{\HGC(\FA_j)} . |Z(q,\FA_j)| \\
        \nonumber
        & \leq & \sum_{\{\FA_j \suchthat \FA_j\precA \FA_i\}} \sum_{\{q \in Others(\FA_i,
          p)\}}
        \bigg(n . \big(1 + \DeltamGC . \big(1 +
        maxO(\FA_i)\big)\big)\bigg)^{\HGC(\FA_i)-1} . n\\
        & \leq &  \DeltamGC . maxO(\FA_i). n^{\HGC(\FA_i)} .
        \big(1 + \DeltamGC . \big(1 + 
        maxO(\FA_i)\big)\big)^{\HGC(\FA_i)-1}
      \end{align}
    \end{description}
    Overall $\#m(e,\FA_i,p)$ is less than or equal to 1 plus the sum
    of (1), (2), and (3) which is less than or equal to
    $$
    n^{\HGC(\FA_i)} .  \big(1 + \DeltamGC . \big(1 +
    maxO(\FA_i)\big)\big)^{\HGC(\FA_i)} . |Z(p,\FA_i)|
    $$
  \end{description}
\end{proof}

Since $maxO(\FA_i) \leq \DeltaG$ (Remark~\ref{rem:maxO}) and
$|Z(p,\FA_i)| \leq n$ (by Remark~\ref{rem:z}), we have

\begin{corollary}\label{coro:moves}
  Every execution of $\mathcal A$ on $G$ contains at most
  $\big(1 + \DeltamGC . (1+\DeltaG)\big)^{\HGC} . k . n^{\HGC+2}$
  moves, where $k$ is the number of families of $\mathcal A$, 
$\DeltamGC$ is the in-degree of
  $\GC$, and  $\HGC$ the height of $\GC$.
\end{corollary}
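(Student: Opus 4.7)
The plan is to obtain the corollary by summing the per-process, per-family bound from Lemma~\ref{lem:moves} over all processes and all families, and then simplifying using the two Remarks already established in the section.

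First, I would fix an arbitrary execution $e$ of $\mathcal{A}$ on $G$, a family $\FA_i$ with $i \in \{1, \dots, k\}$, and a process $p \in V$. Starting from Lemma~\ref{lem:moves}, I would replace $maxO(\FA_i)$ by $\DeltaG$ (by Remark~\ref{rem:maxO}) and $|Z(p,\FA_i)|$ by $n$ (by Remark~\ref{rem:z}) to obtain
\[
\#m(e,\FA_i,p) \;\leq\; \bigl(n \cdot (1 + \DeltamGC \cdot (1+\DeltaG))\bigr)^{\HGC(\FA_i)} \cdot n.
\]
Next, since $\HGC(\FA_i) \leq \HGC$ by definition of the height of $\GC$, and since the factor $n \cdot (1 + \DeltamGC \cdot (1+\DeltaG)) \geq 1$, I would monotonically upper-bound the exponent by $\HGC$, yielding
\[
\#m(e,\FA_i,p) \;\leq\; n^{\HGC+1} \cdot \bigl(1 + \DeltamGC \cdot (1+\DeltaG)\bigr)^{\HGC}.
\]

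The total number of moves in $e$ is $\sum_{i=1}^{k} \sum_{p \in V} \#m(e,\FA_i,p)$, because the families' partition $\FA_1, \dots, \FA_k$ covers every action of $\mathcal{A}$ exactly once (well-formedness, Remark~\ref{rem:wf}), and every move corresponds to the execution of some action $\FA_i(p)$. Since there are $k$ families and $n$ processes, summing the above uniform bound gives the claimed quantity
\[
k \cdot n \cdot n^{\HGC+1} \cdot \bigl(1 + \DeltamGC \cdot (1+\DeltaG)\bigr)^{\HGC} \;=\; \bigl(1 + \DeltamGC \cdot (1+\DeltaG)\bigr)^{\HGC} \cdot k \cdot n^{\HGC+2}.
\]

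There is no real obstacle here: the corollary is a direct, essentially mechanical consequence of Lemma~\ref{lem:moves} combined with Remarks~\ref{rem:z} and \ref{rem:maxO}. The only mild care needed is to uniformize the $\FA_i$-dependent quantities ($\HGC(\FA_i)$, $|Z(p,\FA_i)|$, $maxO(\FA_i)$) by their worst-case global bounds before summing, and to check that the base of the exponential is at least $1$ so that replacing $\HGC(\FA_i)$ by $\HGC$ preserves the inequality.
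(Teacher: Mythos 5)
Your proposal is correct and follows exactly the route the paper takes: the corollary is stated as an immediate consequence of Lemma~\ref{lem:moves} after bounding $maxO(\FA_i)$ by $\DeltaG$ (Remark~\ref{rem:maxO}), $|Z(p,\FA_i)|$ by $n$ (Remark~\ref{rem:z}), and $\HGC(\FA_i)$ by $\HGC$, then summing over the $k$ families and $n$ processes. Your write-up just makes explicit the uniformization and summation steps that the paper leaves implicit.
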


From Corollary \ref{coro:moves} and Definition~\ref{def:selfstabilization}, follows. 

\begin{theorem}\label{theo:silentself}
  Let $\mathcal{A}$ be a distributed algorithm for a network $G$
  endowed with a spanning forest, $SP$ a predicate over the
  configurations of $\mathcal{A}$.
  If $\mathcal{A}$ follows an {\em acyclic strategy} and every
  terminal configuration of $\mathcal{A}$ satisfies
  $SP$, then
\begin{itemize}
\item $\mathcal{A}$ is silent and self-stabili\-zing for~$SP$ in $G$
  under the distributed unfair daemon, and
\item its stabilization time
  is at most $\big(1 + \DeltamGC . (1+\DeltaG)\big)^{\HGC} . k
  . n^{\HGC+2}$ moves, 
\end{itemize}
where $k$ is the number of families of $\mathcal A$, $\DeltamGC$ is
the in-degree of $\GC$, and $\HGC$ the height of $\GC$.
\end{theorem}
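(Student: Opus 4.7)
The plan is to observe that this theorem is essentially a packaging of Corollary~\ref{coro:moves} together with Definition~\ref{def:selfstabilization}, so the proof should be short and purely combinatorial. First I would invoke Corollary~\ref{coro:moves} to obtain the bound $B = \big(1 + \DeltamGC . (1+\DeltaG)\big)^{\HGC} . k . n^{\HGC+2}$ on the total number of moves in any execution of $\mathcal{A}$ on $G$.

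Next, I would translate this move bound into a bound on the length of executions. Since every step involves at least one enabled process being activated by the daemon (by the definition of a step in Section~\ref{sec:semantics}), each step contains at least one move. Therefore an execution containing at most $B$ moves consists of at most $B$ steps, and in particular every execution is finite. By the definition of execution (maximality), a finite execution must end in a terminal configuration.

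I then plug this into Definition~\ref{def:selfstabilization}. The first condition of silent self-stabilization, namely finiteness of every execution under the distributed unfair daemon, is exactly what we just established; the second condition, namely that every terminal configuration satisfies $SP$, is given as a hypothesis of the theorem. Hence $\mathcal{A}$ is silent and self-stabilizing for $SP$ in $G$ under the distributed unfair daemon. The bound on the stabilization time in moves is then immediate from $B$, since by the definition of stabilization time (end of Section on silent self-stabilization) it is the maximum number of moves over all executions from any initial configuration.

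There is no real obstacle here: all the combinatorial work lives in Lemma~\ref{lem:moves} and its corollary. The only subtlety worth writing out carefully is the passage from ``finitely many moves'' to ``finite execution'', which relies on the fact that the daemon is distributed (so every non-terminal step fires at least one action) and therefore cannot produce an infinite tail of ``empty'' steps.
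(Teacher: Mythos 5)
Your proposal is correct and matches the paper exactly: the paper derives Theorem~\ref{theo:silentself} directly from Corollary~\ref{coro:moves} and Definition~\ref{def:selfstabilization}, which is precisely your argument. You merely spell out the (valid and worthwhile) intermediate step that a bounded number of moves forces every execution to be finite because the distributed daemon activates at least one process per non-terminal step.
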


\section{Toy Example}\label{sect:toy}

In this section, we propose a simple example of algorithm, called
Algorithm $\mathcal{TE}$, to show how to instantiate our results. The
aim of this section is threefold: (1) show that correctness and move
complexity of $\mathcal{TE}$ can be easily deduced from our general
results, (2) our upper bound on stabilization time in moves is tight
for this example, and (3) our definition of acyclic strategy allows
the design of solutions (like $\mathcal{TE}$) that are inefficient in
terms of rounds. We will show how to circumvent this latter negative
result in Section~\ref{sect:rounds}.

$\mathcal{TE}$ assumes a constant integer input $p.input \in \mathds N$ at
each process.  $\mathcal{TE}$ computes the sum of all inputs and then 
spreads this result everywhere in the network.
$\mathcal{TE}$ assumes that the network $T = (V,E)$ is a tree ({\em
  i.e.}, an undirected connected acyclic graph) with a sense of
direction (given by variables named $parent$ and $children$) which defines a
spanning in-tree rooted at process $r$ (the unique root, {\em i.e.},
the unique process satisfying $r.parent = \perp$).

Apart from those constant variables, every process $p$ has two
variables: $p.sub\in \mathds N$ (which is used to compute the sum of
input values in the subtree of $p$) and $p.res\in \mathds N$ (which
stabilizes to the result of the computation, {\em i.e.}, the sum of
all inputs).  The algorithm consists of two families of actions $S$
and $R$. $S$ computes variables $sub$ and is defined as follows.

For every process $p$ 
$$
S(p) ::\
p.sub \neq (\sum_{q \in p.children} q.sub) + p.input
\to p.sub \gets (\sum_{q \in p.children} q.sub) + p.input
$$

\noindent $R$ computes variables $res$ and is defined as follows.

$$R(r) :: r.res \neq r.sub \to r.res \gets r.sub$$
For every process $p \neq r$ 
$$R(p) :: p.res \neq \max(p.parent.res, p.sub) \to p.res \gets
\max(p.parent.res, p.sub)
$$

Remark that $S$ is bottom-up and correct-alone, while $R$ is top-down
and correct-alone. Moreover, the graph of actions' causality is
simply $$S \longrightarrow R$$
\noindent So, by Corollary~\ref{coro:moves} (with $\DeltamGC = 1$,
$\HGC = 1$ and $k = 2$), every execution of the algorithm contains at
most $(2 + \DeltaG).n^3$ moves and, as a direct consequence, every
execution terminates under the distributed unfair daemon. Notice also
that in every terminal configuration, every process $p$ satisfies the
following properties:
\begin{enumerate}
\item $p.sub = p.input + \sum_{q \in p.children} q.sub$,
\item $p.res = p.sub$ {\bf if} $p = r$, $p.res = \max(p.parent.res,
  p.sub)$ {\bf otherwise}.
\end{enumerate}
Let $P_{input} \equiv \forall p \in V, p.res = \sum_{q \in V}
q.input$.  By induction on the tree $T$, we can show that $P_{input}$
holds in any terminal configuration. Hence, by Theorem
\ref{theo:silentself}, follows:
\begin{lemma}
  The algorithm $\mathcal{TE}$ is silent and self-stabilizing for
  $P_{input}$ in $T$ under a distributed unfair daemon; its
  stabilization time is at most $(2+\DeltaG).n^3$ moves.
\end{lemma}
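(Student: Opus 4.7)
The plan is to verify that $\mathcal{TE}$ follows an acyclic strategy and then invoke Theorem~\ref{theo:silentself}. First I would confirm well-formedness: setting $Var_1 = \{sub\}$ and $Var_2 = \{res\}$ yields two families $\FA_1 = S$ and $\FA_2 = R$ with $\w{S(p)} = \{p.sub\}$ and $\w{R(p)} = \{p.res\}$ at every $p$, so $k = 2$. For any $q.v \in \pr{S(p)} \cap \w{S(q)}$ we have $v = sub$ and $q \in \{p\} \cup p.children$, so $S$ is bottom-up; symmetrically, the $q.v \in \pr{R(p)} \cap \w{R(q)}$ satisfy $q \in \{p, p.parent\}$, so $R$ is top-down. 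Correct-aloneness is immediate: if $S(p)$ is executed and no variable in $\pr{S(p)} \setminus \{p.sub\}$ is modified during the step, then $p.sub$ becomes $(\sum_{q \in p.children} q.sub) + p.input$, disabling $S(p)$; the same argument applies to $R(p)$. Finally, $\w{S(p)} \cap \pr{R(p)} = \{p.sub\} \neq \emptyset$ gives $S \precA R$, while $R$ writes only $res$-variables that no $S$-action G-reads, so $R \not\precA S$; hence $\GC$ is the single edge $S \to R$, which is acyclic, with $\HGC = 1$ and $\DeltamGC = 1$.

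Second, I would invoke Corollary~\ref{coro:moves} with $k = 2$, $\HGC = 1$, and $\DeltamGC = 1$ to obtain the move bound $(2+\DeltaG) \cdot n^3$ stated in the lemma, which in particular forces every execution under the distributed unfair daemon to terminate.

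Third, and this is the main obstacle, I would show that every terminal configuration satisfies $P_{input}$ so that Theorem~\ref{theo:silentself} applies. In a terminal configuration, every $S(p)$ and every $R(p)$ is disabled. A bottom-up induction on the height of $p$ in $T$ using the disabled guard of $S(p)$ yields $p.sub = \sum_{q \in Descendants(p)} q.input$; in particular, $r.sub = \Sigma$ where $\Sigma = \sum_{q \in V} q.input$. A top-down induction on the depth of $p$ using the disabled guard of $R(p)$ then gives $p.res = \Sigma$: at the root, $r.res = r.sub = \Sigma$; for $p \neq r$, $p.res = \max(p.parent.res, p.sub) = \max(\Sigma, p.sub) = \Sigma$, where the last equality uses $p.sub \leq \Sigma$ from the first induction. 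Together with the previous step, Theorem~\ref{theo:silentself} then delivers silence, self-stabilization for $P_{input}$, and the stated stabilization time.
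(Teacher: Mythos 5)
Your proposal is correct and follows essentially the same route as the paper: check that $S$ is bottom-up, $R$ is top-down, both are correct-alone, the causality graph is the single edge $S\to R$, apply Corollary~\ref{coro:moves} with $k=2$, $\HGC=1$, $\DeltamGC=1$, and then establish $P_{input}$ at terminal configurations by induction on the tree before invoking Theorem~\ref{theo:silentself}. The only difference is that you spell out the two inductions (bottom-up for $sub$, top-down for $res$) that the paper leaves as ``by induction on the tree $T$,'' which is a welcome addition rather than a divergence.
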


Using Lemma~\ref{lem:moves} directly, the move complexity of
$\mathcal{TE}$ can be further refined. Let $e$ be any execution and
$\HG$ be the height of $T$. First, note that, $maxO(S) = maxO(R) = 0$,
by Remark~\ref{rem:maxO}.

\begin{enumerate}
\item Since $S$ is bottom-up, $|Z(p,S)| \leq n$, for every process
  $p$. Moreover, the height of $S$ is 0 in the graph of actions'
  causality. Hence, by Lemma~\ref{lem:moves}, we have $\#m(e,S,p) \leq
  n$, for all processes $p$.  Thus, $e$ contains at most $n^2$
  moves of $S$.
\item Since $R$ is top-down, $|Z(p,R)| \leq
  \HG+1$, for every process $p$. Moreover, the height of $R$
  is 1 in the graph of actions' causality. Hence, by
  Lemma~\ref{lem:moves}, we have $\#m(e,R,p) \leq
  2.n.(\HG+1)$, for all processes $p$. Thus, $e$ contains at most
  $2.n^2.(\HG+1)$ moves of $R$.
\end{enumerate}
Overall, we have

\begin{lemma}
  \label{lem:stab-time-move-TE}
  The stabilization time of the algorithm $\mathcal{TE}$ is at most
  $n^2 (3 + 2\HG)$ moves, {\em i.e.}, $O(\HG.n^2)$ moves.
\end{lemma}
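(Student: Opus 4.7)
The plan is to combine the two per-process bounds on $\#m(e,S,p)$ and $\#m(e,R,p)$ derived just above the statement (items (1) and (2)), each of which is a direct instantiation of Lemma~\ref{lem:moves}. Since $\mathcal{TE}$ has exactly two families of actions, every move in an arbitrary execution $e$ is either an $S$-move or an $R$-move executed by some process; hence the total number of moves in $e$ equals $\sum_{p\in V} \#m(e,S,p) + \sum_{p\in V} \#m(e,R,p)$, and I would bound this sum directly by summing the two per-process bounds over the $n$ processes.

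First, item (1) asserts $\#m(e,S,p) \leq n$ for every $p\in V$, which yields a total of at most $n^2$ moves of $S$ in $e$. Then item (2) asserts $\#m(e,R,p) \leq 2.n.(\HG+1)$ for every $p\in V$, which yields a total of at most $2.n^2.(\HG+1)$ moves of $R$ in $e$. Adding the two contributions gives a total of at most $n^2 + 2.n^2.(\HG+1) = n^2.(3+2.\HG)$ moves, which is $O(\HG.n^2)$.

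Since this bound holds for every execution of $\mathcal{TE}$ under the distributed unfair daemon, the refined stabilization-time bound follows. There is essentially no obstacle here: all the heavy lifting was performed in Lemma~\ref{lem:moves}, and the only remaining task is to combine the already-computed per-family bounds, which crucially exploit the small values $\HGC(S) = 0$, $\HGC(R) = 1$, $\DeltamGC = 1$ and $maxO(S) = maxO(R) = 0$, together with the impacting-zone bounds $|Z(p,S)| \leq n$ (since $S$ is bottom-up and so its impacting zone is a subtree of size at most $n$) and $|Z(p,R)| \leq \HG+1$ (since $R$ is top-down and so its impacting zone is a set of ancestors, of size at most the height of the tree plus one).
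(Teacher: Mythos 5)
Your proposal is correct and follows exactly the paper's own argument: the paper proves this lemma by the two enumerated items immediately preceding it (instantiating Lemma~\ref{lem:moves} with $\HGC(S)=0$, $\HGC(R)=1$, $\DeltamGC=1$, $maxO(S)=maxO(R)=0$, $|Z(p,S)|\leq n$, $|Z(p,R)|\leq \HG+1$) and then sums $n^2 + 2n^2(\HG+1) = n^2(3+2\HG)$, just as you do.
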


\subsection{Lower Bound in Moves}
\label{sec:lower-bound-moves}

We now show that the stabilization time of $\mathcal{TE}$ is
$\Omega(\HG.n^2)$ moves, meaning that the upper bound given by
Lemma~\ref{lem:stab-time-move-TE} is asymptotically reachable.
To that goal, we consider a directed line of $n$ processes, with
$n\geq 4$, noted $p_1, ..., p_n$: $p_1$ is the root and for every
$i\in\{2, ..., n\}$, there is a link between $p_{i-1}$ and $p_i$,
moreover, $p_i.parent = p_{i-1}$ (note that $\HG = n$).
We build a possible execution of $\mathcal{TE}$ running on this line that contains 
$\Omega(\HG.n^2)$ moves. We assume a central (unfair) daemon: at each
step exactly one process executes an action.  (The central daemon is a
particular case of the distributed unfair daemon.)

In this execution, we fix that $p_i.input = 1$, for every $i\in\{1, ..., n\}$. Moreover, we consider two classes of configurations: Configurations $X_{2i+1}$(with $3 \leq 2i+1\leq n$) and Configurations $Y_{2i+2}$ (with $4 \leq 2i+2\leq n$), see Figure~\ref{TE:conf-moves}. 

\begin{figure}[htpb]
  \textit{Configuration $X_{2i+1}$, $3 \leq 2i+1\leq n$:}

  \begin{center}
  \noindent$
  \begin{array}[t]{c|ccccccccccc}
    & p_1 & \dots & p_{2i-2} & p_{2i-1} & p_{2i} & p_{2i+1} & p_{2i+2}
    & p_{2i+3} & p_{2i+4} & p_{2i+5} & \dots \\
    \hline
    input & 1 & \dots & 1 & 1 & 1 & 1 & 1 & 1 & 1 & 1 & \dots \\
    sub & 2i & \dots & 3 & 2 & 1 & 0 & 2i & 0 & 2i+2 & 0 & \dots \\
    res & 2i & \dots & 2i & 2i & 2i & 0 & 0 & 0 & 0 & 0 & \dots
  \end{array}
  $
\end{center}
%~\\~\\

  \textit{Configuration $Y_{2i+2}$, $4 \leq 2i+2\leq n$:}
 
\begin{center}
  \noindent$
  \begin{array}[t]{c|ccccccccccc}
    & p_1 & \dots & p_{2i-2} & p_{2i-1} & p_{2i} & p_{2i+1} & p_{2i+2}
    & p_{2i+3} & p_{2i+4} & p_{2i+5} & \dots \\\hline
 input & 1 & \dots & 1 & 1 & 1 & 1 & 1 & 1 & 1 &  1& \dots \\
    sub & 4i+1 & \dots & 2i+4 & 2i+3 & 2i+2 & 2i+1 & 2i & 0 & 2i+2 &
    0 & \dots \\
    res & 4i+1 & \dots & 4i+1 & 4i+1 & 4i+1 & 4i+1 & 0 & 0 & 0 & 0 & \dots
  \end{array}
  $
\end{center}
  \caption{Configurations $X_{2i+1}$ and $Y_{2i+2}$}
  \label{TE:conf-moves}
\end{figure}

The initial configuration of the execution is $X_3$.  Then, we proceed
as follows: the system converges from configuration $X_{2i+1}$ to
configuration $Y_{2i+2}$ and then from $Y_{2i+2}$ to $X_{2i+3}$, back
and forth, until reaching a terminal configuration ($X_n$ if $n$ is
odd, $Y_n$ otherwise).

The system converges from
configuration $X_{2i+1}$ to configuration $Y_{2i+2}$, for every $i\geq
1$ and $2i+2\leq n$, in $\Omega(i^2)$ moves when the central daemon
activates processes in the following order:

\noindent\hfill\begin{minipage}[t]{.7\linewidth}
  \begin{algorithmic}[1]
    \DownFor{$j$}{$2i + 1$}{$1$} 
      \State $p_j$ executes $S(p_j)$ \Comment{$p_j.sub = 4i + 2 - j$} 
      \For{$k$}{$j$}{$2i + 1$} 
        \State $p_k$ executes $R(p_k)$ \Comment{$p_k.res = 4i + 2 - j$}
      \EndFor
    \EndDownFor
  \end{algorithmic}
\end{minipage}\hfill\hfill \\

Then, the system converges from configuration $Y_{2i+2}$ to
configuration $X_{2i+3}$, for every $i\geq 1$ and $2i+3\leq n$ in
$\Omega(i)$ moves when the central daemon activates processes in the
following order:

\noindent\hfill\begin{minipage}[t]{.7\linewidth}
  \begin{algorithmic}[1]
    \DownFor{$j$}{$2i + 2$}{$1$} 
       \State $p_j$ executes $S(p_j)$ \Comment{$p_j.sub = 2i + 3 - j$} 
    \EndDownFor
    \For{$j$}{$1$}{$2i + 2$} 
       \State $p_j$ executes $R(p_j)$ \Comment{$p_j.res = 2i + 2$}
    \EndFor
  \end{algorithmic}
\end{minipage}\hfill\hfill \\

Hence, following this scheduling of actions,  
the execution that starts in configuration $X_3$ converges to $X_n$ (resp. $Y_n$) if $n$ is odd
(resp. even) and  contains $\Omega(n^3)$ moves, {\em i.e.},
$\Omega(\HG.n^2)$ since the network is a line.

Remark that in this execution, for every process $p$, when $R(p)$ is
activated, $S(p)$ is disabled: this means that if the algorithm is
modified so that $S(p)$ has local priority over $R(p)$ for every
process $p$ (like in the method proposed in
Subsection~\ref{sub:trans}), the proposed execution is still possible
keeping a move complexity in $\Omega(\HG.n^2)$ even for such a
prioritized algorithm.

% \textit{Initial configuration:} 
% \begin{itemize}
% \item For every $k\in\{1, ..., n\}$, $p_k.input = 1$
% \item $p_1.sub = 2$, $p_2.sub = 1$, $p_1.res = p_2.res = 2$
% \item For every $k\in\{3, ..., n\}$, $p_k.res = 0$
% \item For every $i\in\{1, ... \}$, $p_{2i+1}.sub = 0$, $p_{2i+2}.sub =
%   2i$
% \end{itemize}

% \noindent\begin{minipage}[t]{.5\linewidth}
%   \textit{Configuration $X_{2i+1}$, $i\geq 1$:}

%   \noindent$
%   \begin{array}[t]{c|cc}
%             & sub    & res \\\hline
%     p_1     & 2i     & 2i \\  
%     \vdots  & \vdots & \vdots \\
%     p_{2i-2} & 3      & 2i \\
%     p_{2i-1} & 2      & 2i \\
%     p_{2i}   & 1      & 2i \\
%     p_{2i+1} & 0      & 0 \\
%     p_{2i+2} & 2i     & 0 \\
%     p_{2i+3} & 0      & 0 \\\hline
%     p_{2i+4} & 2i+2   & 0 \\
%     p_{2i+5} & 0      & 0 \\
%     \vdots  & \vdots & \vdots \\
%   \end{array}
%   $
% \end{minipage}\hfill%
% \begin{minipage}[t]{.5\linewidth}
%   \textit{Configuration $Y_{2i+2}$, $i\geq 1$:}

%   \noindent$
%   \begin{array}[t]{c|cc}
%             & sub    & res \\\hline
%     p_1     & 4i+1   & 4i+1 \\  
%     \vdots  & \vdots & \vdots \\
%     p_{2i-2} & 2i+4   & 4i+1 \\
%     p_{2i-1} & 2i+3   & 4i+1 \\
%     p_{2i}   & 2i+2   & 4i+1 \\
%     p_{2i+1} & 2i+1   & 4i+1 \\
%     p_{2i+2} & 2i     & 0 \\
%     p_{2i+3} & 0      & 0 \\
%     p_{2i+4} & 2i+2   & 0 \\\hline
%     p_{2i+5} & 0      & 0 \\
%     \vdots  & \vdots & \vdots \\
%   \end{array}
%   $
% \end{minipage}\\

\subsection{Lower Bound in Rounds}\label{sub:rounds}

We now show that $\mathcal{TE}$ has a stabilization time in
$\Omega(n)$ rounds in any tree of height $\HG = 1$, {\em i.e.}, a star
network.  This negative result is mainly due to the fact that families
$R$ and $S$ are not locally mutually exclusive. In the next section,
we will propose a simple transformation to obtain a stabilization time
in $O(\HG)$ rounds, so $O(1)$ rounds in the case of a star network. We
will also show that this latter transformation does not affect the
move complexity.

Our proof consists in exhibiting a possible execution that terminates
in $n+3$ rounds assuming a central unfair daemon, that is, at
each step exactly one process executes an action. Notice that the
central unfair daemon is a particular case of the distributed unfair
daemon.

%Notice also that in the execution we build below, except in a
%particular case, all processes execute at least one action during each
%round and the last process that executes in the round execute exactly
%one action in the round.

\begin{figure} [h]
\begin{minipage}{.28\linewidth}
  \begin{center}
    \includegraphics[scale=0.25]{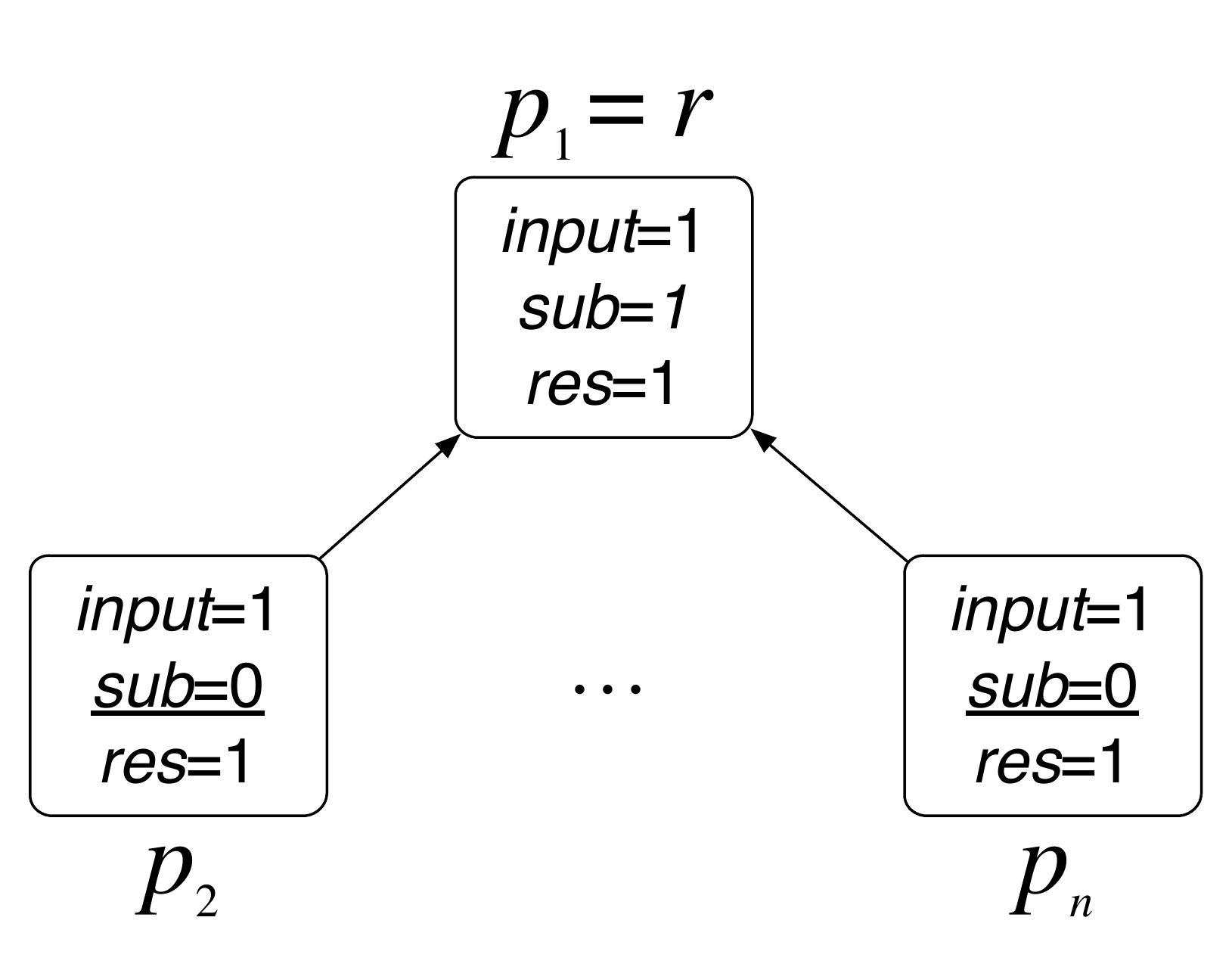}
 \caption{$C_1$, initial configuration.\label{fig:c1}}
 \end{center}
 \end{minipage}
\hfill
\begin{minipage}{.32\linewidth}
\vspace*{-15pt}
  \begin{center}
     \includegraphics[scale=.25]{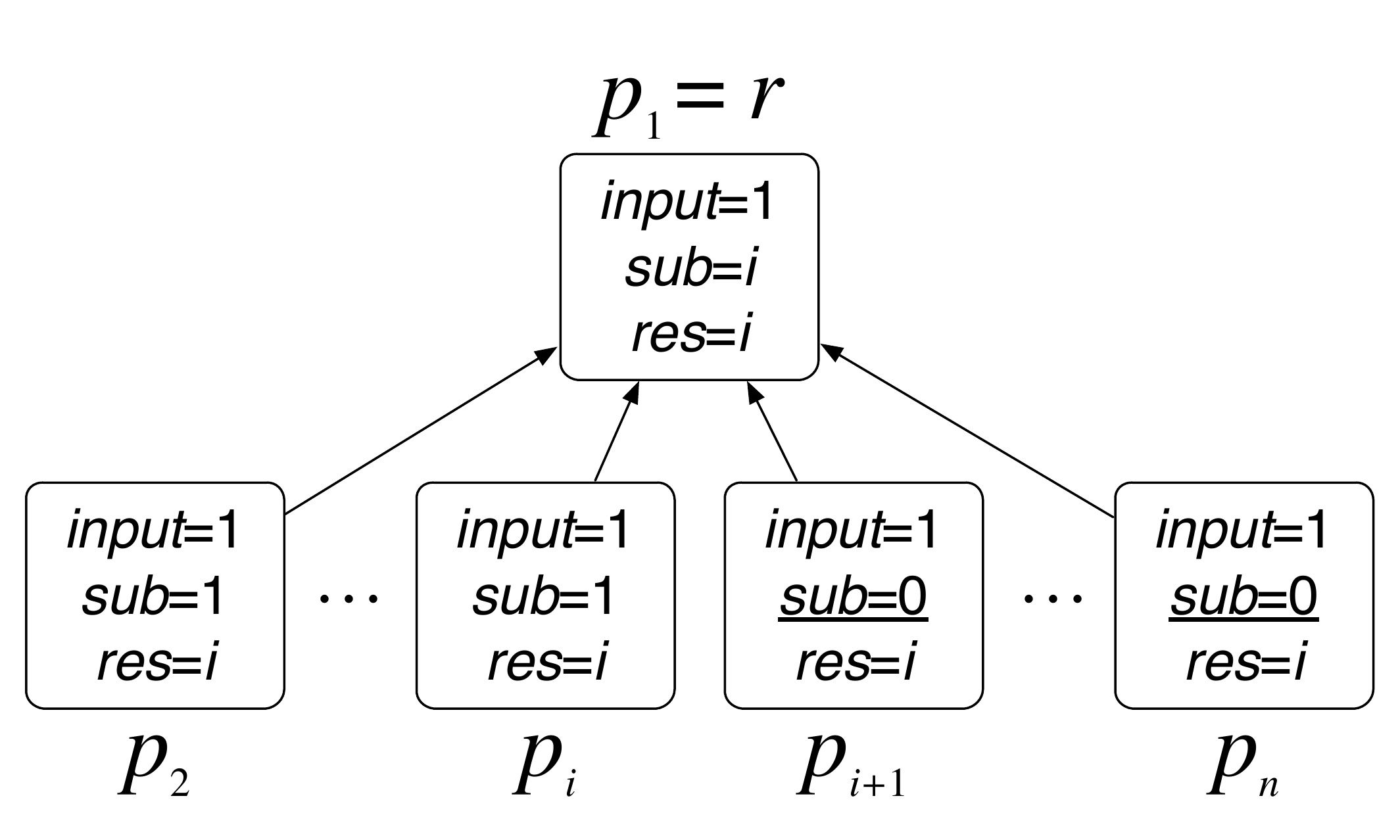} 
     \caption{$C_i$, $i \in \{2, ..., n-1\}$.\label{fig:ci}}
 \end{center}
\end{minipage}
\hfill
\begin{minipage}{.28\linewidth}
  \begin{center}
     \includegraphics[scale=.25]{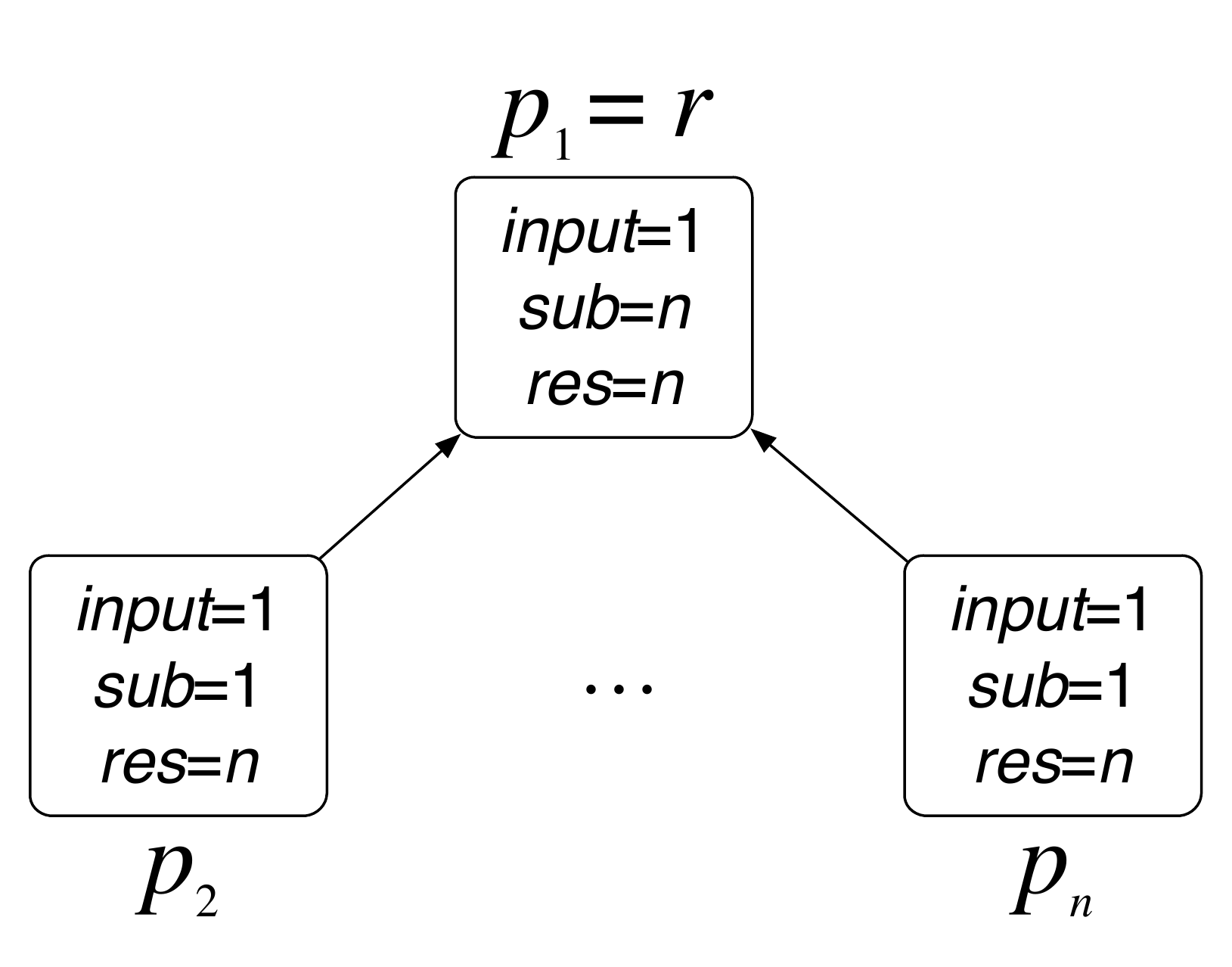} 
     \caption{$C_n$, terminal configuration.\label{fig:cn}}
 \end{center}
 \end{minipage}
%\hfill
%\caption{ A variable is
%    underlined if an action is enabled to modify it.}\label{TE:conf-round}
\end{figure}

We consider a star network of $n$ processes ($n\geq 2$): $p_1$ is
the root of the tree and $p_2, ..., p_n$ are the leaves (namely links
are $\{\{p_1, p_i\}, i = 2, ..., n\}$). We note $C_i$, $i\in\{1, ...,
n\}$, the configuration satisfying the following three conditions:
\begin{itemize}
\item for every $j\in\{1, ..., n\}$, $p_j.input = 1$;
\item $p_1.sub = i$, for every $j\in\{2, ..., i\}$, $p_j.sub = 1$, and
  for every $j\in\{i+1, ..., n\}$, $p_j.sub = 0$; and
\item for every $j\in\{1, ..., n\}$, $p_j.res = i$.
\end{itemize}
$C_1$, $C_i$ with $i \in \{2, ..., n-1\}$, and $C_n$ are respectively
shown in Figures \ref{fig:c1}, \ref{fig:ci}, and \ref{fig:cn}. In
these figures, a variable is underlined whenever an action is enabled
to modify it.
Note that in configuration $C_i$, processes $p_1$, ..., $p_i$ are
disabled and processes $p_{i+1}, ..., p_n$ are enabled for $S$.  We
now build a possible execution that starts from $C_1$ and successively
converges to configurations $C_2$, ..., $C_n$ ($C_n$ is a terminal
configuration).  To converge from $C_i$ to $C_{i+1}$, $i\in\{1,
..., n-1\}$, the daemon applies the following scheduling:

\noindent\hfill\begin{minipage}[t]{.7\linewidth}
  \begin{algorithmic}[1]
    \State $p_{i+1}$ executes $S(p_{i+1})$ \Comment{$p_{i+1}.sub = 1$}
    \State $p_1$ executes $S(p_1)$ \Comment{$p_{1}.sub = i + 1$}
    \State $p_1$ executes $R(p_1)$ \Comment{$p_{1}.res = i + 1$}
    \For{$j$}{$2$}{$n$} 
        \State $p_j$ executes $R(p_j)$ \Comment{$p_j.res = i + 1$}
    \EndFor
  \end{algorithmic}
\end{minipage}\hfill\hfill~\\

For $i \in \{1, ..., n-2\}$, the convergence from $C_i$ to $C_{i+1}$
lasts exactly one round. Indeed, each process executes at least one
action between $C_i$ and $C_{i+1}$ and process $p_n$ is enabled at
configuration $C_i$ and remains continuously enabled until being
activated as the last process to execute in the round.
% : each process $p_j$, $j \in \{i+1, ..., n\}$ executes $R(p_j)$,
% while $p_1$ and $p_{i+1}$ execute both $S$ and $R$; finally
The convergence from $C_{n-1}$ to $C_n$ lasts four rounds: in
$C_{n-1}$, only $p_n$ is enabled to execute $S(p_n)$ hence the round
terminates in one step where only $S(p_n)$ is executed. Similarly,
$p_1$ then sequentially executes $S(p_1)$ and $R(p_1)$ in two
rounds. Finally, $p_2, ..., p_n$ execute $R$ in one round and then the
system is in the terminal configuration $C_n$.

Hence the above execution lasts $n+3$ rounds.

\section{Round Complexity of Algorithms with Acyclic Strategy}
\label{sect:rounds}

In this section, we first propose an extra condition that is
sufficient for any algorithm following an acyclic strategy to
stabilize in $O(\HG)$ rounds. We then propose a simple method to add
this property to any algorithm that follows an acyclic strategy,
without compromising the move complexity.

\subsection{A Condition for a Stabilization Time in $O(\HG)$ rounds}

Let $\FA_1, ..., \FA_k$ be the families' partition of $\mathcal A$.
We say that two families $\FA_i$ and $\FA_j$ are {\em locally mutually
  exclusive} if for every process $p$, there is no configuration
$\gamma$ where both $\FA_i(p)$ and $\FA_j(p)$ are enabled.
By extension, we say $\mathcal A$ is {\em locally mutually exclusive}
if for every $i,j \in \{1, ..., k\}$, $i \neq j$ implies that $\FA_i$
and $\FA_j$ are {\em locally mutually exclusive}.

\begin{theorem}\label{theo:rounds}
  Let $\mathcal{A}$ be a distributed algorithm for a network $G$
  endowed with a spanning forest.
  If $\mathcal{A}$ follows an acyclic strategy and is 
    locally mutually exclusive, then every execution of $\mathcal{A}$
  reaches a terminal configuration within at most $(\HGC+1).(\HG+1)$
  rounds, where $\HGC$ is the height of the graph of actions' causality
  $\GC$ of $\mathcal{A}$ and $\HG$ is the height of the spanning forest
  in $G$.
\end{theorem}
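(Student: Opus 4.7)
The plan is to prove the round bound by strong induction on a potential $L(\FA_i, p) = \HGC(\FA_i) \cdot (\HG+1) + \MG(\FA_i, p) + 1$, reusing the measure $\MG(\FA_i, p)$ already introduced in Section~\ref{sect:move} (the level of $p$ when $\FA_i$ is top-down, the height of $p$ when $\FA_i$ is bottom-up). Since $\MG \leq \HG$ and $\HGC(\FA_i) \leq \HGC$, the quantity $L$ is bounded by $\HGC(\HG+1) + (\HG+1) = (\HGC+1)(\HG+1)$, matching the claimed bound. The inductive claim I would prove is: for every $(\FA_i, p)$, the action $\FA_i(p)$ is not executed in any round strictly greater than $L(\FA_i, p)$.

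The first step is to catalogue the actions that can modify a writable variable in $\pr{\FA_i(p)}$. By well-formedness (Remark~\ref{rem:wf}), each such variable is written by a unique action $\FA_j(q)$. If $j \neq i$, then by definition of $\precA$ we have $\FA_j \precA \FA_i$ and hence $\HGC(\FA_j) < \HGC(\FA_i)$, giving $L(\FA_j, q) \leq (\HGC(\FA_j)+1)(\HG+1) \leq \HGC(\FA_i)(\HG+1) < L(\FA_i, p)$. If $j = i$ and $q \neq p$, then the top-down (resp.\ bottom-up) hypothesis forces $q = p.parent$ (resp.\ $q \in p.children$), so $\MG(\FA_i, q) < \MG(\FA_i, p)$ and hence $L(\FA_i, q) < L(\FA_i, p)$. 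Applying the induction hypothesis to all such $(\FA_j, q)$, none of them fire in any round $\geq L(\FA_i, p)$; consequently, every variable in $\pr{\FA_i(p)} \setminus \w{\FA_i(p)}$ is frozen from round $L(\FA_i, p)$ onward, and the variables in $\w{\FA_i(p)}$ can only be rewritten by $\FA_i(p)$ itself.

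The second step is to show that within round $L(\FA_i, p)$, the action $\FA_i(p)$ becomes permanently disabled. If it is already disabled at the start of this round, it remains so forever, since re-enabling would require modifying a variable in $\pr{\FA_i(p)}$, which the previous paragraph excludes (together with $\FA_i(p)$ itself being disabled and therefore not firing). If instead $\FA_i(p)$ is enabled at the start, then local mutual exclusivity makes it the \emph{unique} enabled action at $p$; since no action can alter $\pr{\FA_i(p)}$ during the round except $\FA_i(p)$ itself, $p$ cannot be neutralized, so by the definition of a round $p$ must execute $\FA_i(p)$ at some step of round $L(\FA_i, p)$; the correct-alone property then ensures $\FA_i(p)$ is disabled in the successor configuration, and the preceding subcase keeps it disabled thereafter.

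The main obstacle is the clean handling of round semantics — simultaneously ruling out neutralization of $\FA_i(p)$ and excluding the possibility that an enabled $p$ discharges its round by firing some \emph{other} action while leaving $\FA_i(p)$ enabled. Local mutual exclusivity is exactly the ingredient that closes this gap, echoing the pathology of Section~\ref{sub:rounds} where two non-exclusive families cost $\Theta(n)$ rounds on a star. Once the inductive claim is established for every $(\FA_i, p)$, taking the maximum of $L$ yields that after round $(\HGC+1)(\HG+1)$ every action is permanently disabled, i.e., the configuration is terminal, which completes the proof.
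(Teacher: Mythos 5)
Your proof is correct and follows essentially the same route as the paper's: the same potential $\HGC(\FA_i)\cdot(\HG+1)+\MG(\FA_i,p)+1$, the same induction showing each $\FA_i(p)$ is permanently disabled after that many rounds, and the same use of well-formedness, the bottom-up/top-down dichotomy, local mutual exclusivity, and correct-aloneness to freeze the G-read variables and force one last execution. The only cosmetic difference is that you fold the paper's explicit base case into a single strong induction.
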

\begin{proof}
  Let $\FA_i$ be a family of actions of $\mathcal{A}$ and $p$ be a
  process. We note $R(\FA_i,p) = \HGC(\FA_i).(\HG+1) + \MG(\FA_i,p) +
  1$ (recall that $\HGC(\FA_i)$ and $\MG(\FA_i,p)$ are defined in
  Section~\ref{sect:move}).

  We now show by induction that for every family $\FA_i$ and every
  process $p$, after $R(\FA_i,p)$ rounds $\FA_i(p)$ is disabled
  forever.

  Let $p$ be a process and $\FA_i$ be a family. By definition,
  $\HGC(\FA_i) \geq 0$, $\HG \geq 0$, and $\MG(\FA_i,p) \geq 0$, hence
  $R(\FA_i,p) \geq 1$.
  \begin{description}
  \item[Base Case:] Assume that $R(\FA_i,p) = 1$. By definition,
    $\MG(\FA_i,p) = 0$ and $\HGC(\FA_i) = 0$.
Since
    $\MG(\FA_i,p) = 0$, $Z(p,\FA_i) = \{p\}$.  So, since $\FA_i$ is
    top-down or bottom-up, for every $q.v \in \pr{\FA_i(p)}$,
    $q.v\in\w{\FA_i(q)} \Rightarrow q = p$.  
    Moreover, since $\HGC(\FA_i) = 0$, $\forall j \neq i$, $\FA_j
    \not\precA \FA_i$. So, for every $j \neq i$ and every $q \in
    p.\neigh\cup\{p\}$, $\w{\FA_j(p)} \cap \pr{\FA_i(q)} = \emptyset$.
    Hence, no action except $\FA_i(p)$ can modify a variable in
    $\pr{A_i(p)}$.  Thus, if $\FA_i(p)$ is (initially) disabled, then
    $\FA_i(p)$ is disabled forever. Otherwise, $\FA_i(p)$ is
    continuously enabled until being executed; and, within at most
    one round, $\FA_i(p)$ is executed since $\mathcal{A}$ is locally
    mutually exclusive. After this first execution of $\FA_i(p)$,
    $\FA_i(p)$ is disabled forever since $\FA_i$ is correct-alone.

  \item[Induction Hypothesis:] Let $R \geq 1$. Assume that for every
    family $\FA_j$ and every process $q$ such that $R(\FA_j,q) \leq
    R$, after $R(\FA_j,q)$ rounds, $R(\FA_j,q)$ is disabled forever.
    
\item[Induction Step:] Assume that for some family $\FA_i$ and some
    process $p$, $R(\FA_i,p) = R+1$.

    Since $\FA_i$ is either bottom-up or top-down and by definition of
    $\precA$, we can deduce that for every family $\FA_j$, every $q
    \in p.\neigh \cup \{p\}$, and every $q.v \in \w{\FA_j(q)} \cap
    \pr{\FA_i(p)}$ one of the following four conditions holds:
\begin{enumerate}
\item {\em $j = i \wedge q = p$.}
\item {\em $j = i \wedge q \in p.\neigh \cap Z(p,\FA_i)$.} In this case, $\MG(\FA_i,q) <
      \MG(\FA_i,p)$, so $R(\FA_i,q) < R(\FA_i,p) = R+1$.
    \item {\em $j \neq i \wedge q = p \wedge \FA_j \precA \FA_i$.} In
      this case, $\FA_j \precA \FA_i$ implies that $\HGC(\FA_j) <
      \HGC(\FA_i)$, so $R(\FA_j,q) = R(\FA_j,p) < R(\FA_i,p) = R+1$.
\item {\em $j \neq i \wedge q \in p.\neigh \wedge \FA_j \precA
    \FA_i$.} In this case, $\FA_j \precA \FA_i$ implies that
  $\HGC(\FA_j) < \HGC(\FA_i)$. Moreover,
  $\MG(\FA_j,q) - \MG(\FA_i,p) \leq \HG$. So, $R(\FA_j,q) < R(\FA_i,p) =
  R+1$.
\end{enumerate}
Thus, by induction hypothesis, after $R$ rounds, 
all variables of
$\pr{\FA_i(p)}$ satisfying Cases (2), (3), or (4) are constant forever, {\em i.e.}, 
all variables of
$\pr{\FA_i(p)}$, except maybe those written by $\FA_i(p)$ itself (Case (1)), are
constant forever. So, if after $R$ rounds, $\FA_i(p)$ is disabled,
then it is disabled forever. Otherwise, after $R$ rounds, $\FA_i(p)$
is continuously 
enabled until being executed; and, within at most one additional round,
$\FA_i(p)$ is executed since $\mathcal{A}$ is locally mutually
exclusive. After the execution of $\FA_i(p)$, $\FA_i(p)$ is disabled
forever since $\FA_i$ is correct-alone.  Hence, after $R+1$ rounds,
$\FA_i(p)$ is disabled forever, and we are done.
  \end{description}
  Since for every family $\FA_i$ and every process $p$, $\HGC(\FA_i)
  \leq \HGC$ and $M(\FA_i, p) \leq \HG$, we have $R(\FA_i,p)
  \leq (\HGC+1).(\HG+1)$, hence the lemma holds.
\end{proof}

From Theorem~\ref{theo:rounds} and
Definition~\ref{def:selfstabilization}, follows.

\begin{corollary}\label{coro:rounds}
  Let $\mathcal{A}$ be a distributed algorithm for a network $G$
  endowed with a spanning forest and $SP$ a predicate over the
  configurations of $\mathcal{A}$. 
  If $\mathcal{A}$ follows an {\em acyclic strategy}, is {\em locally
    mutually exclusive}, and every terminal configuration of
  $\mathcal{A}$ satisfies  $SP$, then
  \begin{itemize}
  \item $\mathcal{A}$ is silent and self-stabili\-zing for~$SP$ in $G$
    under the distributed unfair daemon, and
  \item its stabilization time is at most $(\HGC+1).(\HG+1)$ rounds,
\end{itemize}
where $\HGC$ the height of the graph of actions' causality
  $\GC$ of $\mathcal{A}$ and $\HG$ is the
height of the spanning forest in $G$.
\end{corollary}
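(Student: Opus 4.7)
The plan is to derive the corollary as an immediate consequence of Theorem~\ref{theo:rounds} and Definition~\ref{def:selfstabilization}, following the same packaging pattern used earlier to derive Theorem~\ref{theo:silentself} from Corollary~\ref{coro:moves}. The hypotheses of the corollary (acyclic strategy plus local mutual exclusivity) are exactly those required by Theorem~\ref{theo:rounds}, which yields that every execution of $\mathcal{A}$ reaches a terminal configuration within at most $(\HGC+1).(\HG+1)$ rounds.

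First I would observe that reaching a terminal configuration entails that the execution is finite: by definition, no process is enabled in a terminal configuration, so no step can extend it, and maximality of executions then forces the execution to end there. Combined with the hypothesis that every terminal configuration of $\mathcal{A}$ satisfies $SP$, both clauses of Definition~\ref{def:selfstabilization} are satisfied, namely every execution under the distributed unfair daemon is finite, and every terminal configuration is legitimate with respect to $SP$. This establishes the first bullet, that $\mathcal{A}$ is silent and self-stabilizing for $SP$ in $G$ under the distributed unfair daemon.

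For the second bullet, the stabilization time in rounds is by definition the maximum number of rounds, over every execution starting from an arbitrary configuration, needed to reach a terminal configuration. Since Theorem~\ref{theo:rounds} already bounds this quantity by $(\HGC+1).(\HG+1)$, the claimed round bound follows directly. There is no genuine obstacle in this proof: it is a straightforward reformulation of Theorem~\ref{theo:rounds} in the vocabulary of silent self-stabilization, so the only thing to take care of is making explicit that a bound on the number of rounds until reaching a terminal configuration is precisely what the definition of stabilization time in rounds measures, and that it also yields the finiteness required by Definition~\ref{def:selfstabilization}.
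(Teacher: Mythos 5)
Your proposal is correct and matches the paper exactly: the paper derives Corollary~\ref{coro:rounds} as an immediate consequence of Theorem~\ref{theo:rounds} and Definition~\ref{def:selfstabilization}, which is precisely the packaging you describe. The extra details you spell out (that reaching a terminal configuration forces finiteness by maximality, and that the round bound is literally the stabilization time) are the same implicit steps the paper leaves to the reader.
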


By definition, $\HGC < k$, the bound exhibited by the previous lemma
is in $O(k.\HG)$ where $k$ is the number of families of the algorithm.
Actually, the local mutual exclusion of the algorithm is usually implemented by
enforcing priorities on families as in the transformer presented below.
Hence, in practical cases, $\HGC = k-1$, as shown in Lemma~\ref{lem:k}.

\subsection{A Transformer}\label{sub:trans}

We have shown in Subsection~\ref{sub:rounds} that there exist
algorithms that follow an acyclic strategy, are not locally mutually
exclusive and stabilize in $\Omega(n)$ rounds in the worst case.  So,
we formalize now a generic method based on priorities over actions to
give the mutually exclusive property to such algorithms; this ensures
a complexity in $O(\HG)$ rounds. Notice that the method does not
degrade the move complexity.

Let $\mathcal{A}$ be any distributed algorithm for a network $G$
endowed with a spanning forest that follows an acyclic strategy. Let
$k$ be the number of families of $\mathcal{A}$.
In the following, for every process $p$ and every family $\FA_i$, we
identify the guard and the statement of Action $\FA_i(p)$ by $G_i(p)$
and $S_i(p)$, respectively.

Let $\prioA$ be any strict total order on families of $\mathcal A$
compatible with $\precA$, {\em i.e.}, $\prioA$ is a binary relation on
families of $\mathcal A$ that satisfies the following three
conditions:
\begin{description}
\item[Strict Order:] $\prioA$ is irreflexive and transitive;\footnote{Notice that irreflexivity
    and transitivity implies asymmetry.}
\item[Total:] for every two families $\FA_i, \FA_j$, we have either
  $\FA_i \prioA \FA_j$, $\FA_j \prioA \FA_i$, or $i = j$; and
\item[Compatibility:] for every two families $\FA_i, \FA_j$, if $\FA_i
  \precA \FA_j$, then $\FA_i \prioA \FA_j$.
\end{description}
Let $\trans{\mathcal A}$ be the following algorithm:
\begin{itemize}
\item $\trans{\mathcal A}$ and $\mathcal A$ have the same set of
  variables.
\item   
Every process $p \in V$ holds the following $k$ actions.  For every $i \in
  \{1, ..., k\}$, 

$$\FA^{\tt T}_i(p)\ ::\ G_i^{\tt T}(p)\ \to\ S_i^{\tt T}(p)$$

where $G_i^{\tt T}(p) = \big(\bigwedge_{\FA_j \prioA \FA_i}
\neg G_j(p)\big) \wedge G_i(p)$ and $S_i^{\tt T}(p) = S_i(p)$.

$G_i(p)$ (resp. the set $\{G_j(p)\suchthat \FA_j \prioA
\FA_i\}$) is called the {\em positive part} (resp. {\em negative
  part}) of $G^{\tt T}_i(p)$.
\end{itemize}
Notice that, by definition, $\precA$ is irreflexive and the graph of
actions' causality induced by $\precA$ is acyclic. Hence, there always
exists a strict total order compatible with $\precA$, {\em i.e.}, the
above transformation is always possible for any algorithm $\mathcal A$
which follows an acyclic strategy.

\begin{remark}
  $\trans{\mathcal A}$ is well-formed and $\FA^{\tt T}_1, ...,
  \FA^{\tt T}_k$ is the families' partition of $\trans{\mathcal A}$,
  where $\FA^{\tt T}_i = \{\FA^{\tt T}_i(p) \suchthat p \in V\}$, for
  every $i \in \{1, ..., k\}$.
\end{remark}

By construction, we have :
\begin{remark}\label{rem:cons}
  For every $i,j \in \{1, ..., k\}$ such that $i \neq j$, and every
  process $p$, the positive part of $G_j^{\tt T}(p)$ belongs to the
  negative part in $G_i^{\tt T}(p)$ if and only if $\FA_j \prioA
  \FA_i$.
\end{remark}

\begin{lemma}\label{lem:lme} $\trans{\mathcal A}$ is locally mutually
  exclusive.
\end{lemma}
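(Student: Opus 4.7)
The plan is to prove local mutual exclusion directly from the construction of the transformed guards, using Remark~\ref{rem:cons} as the key tool. I would fix an arbitrary process $p$ and two distinct families $\FA^{\tt T}_i$ and $\FA^{\tt T}_j$, and show that $\FA^{\tt T}_i(p)$ and $\FA^{\tt T}_j(p)$ cannot both be enabled in any configuration.

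First I would invoke the totality of $\prioA$: since $i \neq j$, either $\FA_i \prioA \FA_j$ or $\FA_j \prioA \FA_i$. Without loss of generality, assume $\FA_i \prioA \FA_j$. By Remark~\ref{rem:cons}, the positive part of $G^{\tt T}_i(p)$, namely $G_i(p)$, appears in the negative part of $G^{\tt T}_j(p)$. In other words, $G^{\tt T}_j(p)$ is a conjunction that contains the literal $\neg G_i(p)$.

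Now suppose for contradiction that both $\FA^{\tt T}_i(p)$ and $\FA^{\tt T}_j(p)$ are enabled in some configuration $\gamma$. Since $\FA^{\tt T}_i(p)$ is enabled, $G^{\tt T}_i(p)$ evaluates to true in $\gamma$, and in particular its positive part $G_i(p)$ holds. But then $\neg G_i(p)$ is false in $\gamma$, which forces $G^{\tt T}_j(p)$ to be false in $\gamma$, contradicting the assumption that $\FA^{\tt T}_j(p)$ is enabled.

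I don't expect any real obstacle here: the heavy lifting has already been done by setting up $\prioA$ as a total order compatible with $\precA$ and by the definition of the transformed guards. The only subtle point is making sure the symmetric case $\FA_j \prioA \FA_i$ is handled by swapping the roles of $i$ and $j$, which the WLOG reduction covers immediately.
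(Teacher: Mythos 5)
Your proof is correct and follows essentially the same route as the paper: both invoke totality of $\prioA$ to assume WLOG $\FA_i \prioA \FA_j$, then use Remark~\ref{rem:cons} to observe that the positive part $G_i(p)$ of $G^{\tt T}_i(p)$ appears negated in $G^{\tt T}_j(p)$, so the two transformed actions cannot be simultaneously enabled at any process. No gaps.
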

\begin{proof}
  Let $\FA_i^{\tt T}$ and $\FA_j^{\tt T}$ be two different families of
  $\trans{\mathcal A}$. Then, either $\FA_i\prioA\FA_j$ or
  $\FA_j\prioA\FA_i$ ($\prioA$ is a strict total order). Without the
  loss of generality, assume $\FA_i\prioA\FA_j$. Let $p$ be any
  process and $\gamma$ be any configuration. The positive part of
  $G_i^{\tt T}(p)$ belongs to the negative part of $G_j^{\tt T}(p)$
  (see Remark \ref{rem:cons}), and consequently, $\FA_i^{\tt T}(p)$
  and $\FA_j^{\tt T}(p)$ cannot be both enabled in $\gamma$. Hence,
  $\FA_i^{\tt T}$ and $\FA_j^{\tt T}$ are locally mutually exclusive,
  which in turns implies that $\trans{\mathcal A}$ is locally mutually
  exclusive.
\end{proof}

\begin{lemma}\label{lem:ordre}
 For every $i,j \in
\{1, ..., k\}$, if $\FA^{\tt T}_j \prec_{\trans{\mathcal A}} \FA^{\tt
  T}_i$, then $\FA_j \prioA \FA_i$.
\end{lemma}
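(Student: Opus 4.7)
The plan is to unpack $\FA^{\tt T}_j \prec_{\trans{\mathcal A}} \FA^{\tt T}_i$ by applying the definition of $\precA$ inside the transformed algorithm, and then to push the resulting witness back to a statement about families of $\mathcal{A}$. By hypothesis, $i\neq j$ and there exist processes $p$ and $q$ with $q\in p.\neigh\cup\{p\}$ and a variable $v\in\w{\FA^{\tt T}_j(p)}\cap\pr{\FA^{\tt T}_i(q)}$. Since the transformer preserves statements ($S^{\tt T}_j = S_j$), we have $\w{\FA^{\tt T}_j(p)}=\w{\FA_j(p)}$, so already $v\in\w{\FA_j(p)}$; the whole interesting work concerns the $G$-read side.

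I will then exploit the precise shape of $G^{\tt T}_i(q)=\bigl(\bigwedge_{\FA_{j'}\prioA\FA_i}\neg G_{j'}(q)\bigr)\wedge G_i(q)$: the variable $v$ must be $G$-read by at least one of these subformulas, so it lies either in the positive part, i.e.\ in $\pr{\FA_i(q)}$, or in the negative part, i.e.\ in $\pr{\FA_{j'}(q)}$ for some $j'$ with $\FA_{j'}\prioA\FA_i$. This gives a natural two-case argument.

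In the positive case, combining $v\in\w{\FA_j(p)}\cap\pr{\FA_i(q)}$ with $q\in p.\neigh\cup\{p\}$ and $i\neq j$ yields $\FA_j\precA\FA_i$ by the definition of $\precA$ in $\mathcal{A}$, and compatibility of $\prioA$ with $\precA$ concludes $\FA_j\prioA\FA_i$. In the negative case I further split on whether $j=j'$: if $j=j'$ then the desired $\FA_j\prioA\FA_i$ is literally $\FA_{j'}\prioA\FA_i$; if $j\neq j'$ the same membership witnesses $\FA_j\precA\FA_{j'}$, hence $\FA_j\prioA\FA_{j'}$ by compatibility, and transitivity of the strict total order $\prioA$ chains this with $\FA_{j'}\prioA\FA_i$ to give $\FA_j\prioA\FA_i$. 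The only real subtlety is to keep this inner case distinction explicit, so that $\precA$ is invoked only on pairs of distinct families (as its definition requires) and the use of transitivity is clearly separated from the trivial collapse $j=j'$.
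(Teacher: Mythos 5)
Your proof is correct and follows essentially the same route as the paper's: unpack the witness variable, split on whether it is $G$-read by the positive or the negative part of $G^{\tt T}_i(q)$, and use compatibility plus transitivity of $\prioA$. Your explicit sub-case $j=j'$ in the negative branch is in fact slightly more careful than the paper, which invokes $\FA_j\precA\FA_k$ ``by definition'' without noting that $\precA$ requires distinct indices (the conclusion still holds there since $\FA_k\prioA\FA_i$ directly gives the claim when $j=k$).
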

\begin{proof}
Let $\FA_i^{\tt T}$ and $\FA_j^{\tt T}$ be two families such that $\FA^{\tt T}_j \prec_{\trans{\mathcal
    A}} \FA^{\tt T}_i$.  Then, $i \neq j$ and there exist two
processes $p$ and $q$ such that $q \in p.\neigh \cup \{p\}$ and
$\w{\FA^{\tt T}_j(p)} \cap \pr{\FA^{\tt T}_i(q)} \neq
\emptyset$. Then, $\w{\FA^{\tt T}_j(p)} =\w{\FA_j(p)}$, and either
$\w{\FA_j(p)} \cap \pr{\FA_i(q)} \neq \emptyset$, or $\w{\FA_j(p)}
\cap \pr{\FA_k(q)} \neq \emptyset$ where $G_k(q)$ belongs to the
negative part of $G^{\tt T}_i(q)$. In the former case, we have
$\FA_j \precA \FA_i$, which implies that $\FA_j\prioA\FA_i$ ($\prioA$
is compatible with $\precA$). In the latter case, $\FA_j
\prec_{\mathcal A} \FA_k$ (by definition) and $\FA_k\prioA\FA_i$ (by
Remark~\ref{rem:cons}). Since, $\FA_j \prec_{\mathcal A} \FA_k$
implies $\FA_j\prioA\FA_k$ ($\prioA$ is compatible with $\precA$), by
transitivity we have $\FA_j \prioA \FA_i$.  Hence, for every $i,j \in
\{1, ..., k\}$, $\FA^{\tt T}_j \prec_{\trans{\mathcal A}} \FA^{\tt
  T}_i$ implies $\FA_j \prioA \FA_i$, and we are done.
\end{proof}

\begin{lemma}\label{lem:as}
 $\trans{\mathcal A}$ follows an acyclic strategy.
\end{lemma}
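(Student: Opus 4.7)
The plan is to verify each of the four conditions required for $\trans{\mathcal A}$ to follow an acyclic strategy: well-formedness, acyclicity of its graph of actions' causality, and, for every family $\FA^{\tt T}_i$, being both correct-alone and either top-down or bottom-up.

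Well-formedness of $\trans{\mathcal A}$ is already stated in the remark preceding Lemma~\ref{lem:lme}. Acyclicity of the graph of actions' causality of $\trans{\mathcal A}$ follows directly from Lemma~\ref{lem:ordre}: every edge $\FA^{\tt T}_j \prec_{\trans{\mathcal A}} \FA^{\tt T}_i$ projects to $\FA_j \prioA \FA_i$, and $\prioA$ is a strict total order, hence acyclic; so $\prec_{\trans{\mathcal A}}$ is acyclic too.

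To establish that each $\FA^{\tt T}_i$ is correct-alone, I would exploit the identities $\w{\FA^{\tt T}_i(p)} = \w{\FA_i(p)}$ and $S^{\tt T}_i(p) = S_i(p)$ together with the inclusion $\pr{\FA_i(p)} \subseteq \pr{\FA^{\tt T}_i(p)}$. In a step of $\trans{\mathcal A}$ where $\FA^{\tt T}_i(p)$ fires and no variable of $\pr{\FA^{\tt T}_i(p)} \setminus \w{\FA^{\tt T}_i(p)}$ is modified, no variable of $\pr{\FA_i(p)} \setminus \w{\FA_i(p)}$ is modified either, and the values in $\pr{\FA_i(p)}$ in the successor configuration coincide with those produced by executing only $\FA_i(p)$ in $\mathcal A$ (which is a valid step of $\mathcal A$, since $G_i(p)$ held as a conjunct of $G^{\tt T}_i(p)$). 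By correct-alone of $\FA_i$, $G_i(p)$ becomes false in this configuration, and since $G^{\tt T}_i(p) = G_i(p) \wedge (\ldots)$, $\FA^{\tt T}_i(p)$ is disabled.

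The main obstacle is preserving the top-down (respectively bottom-up) character of each family, since $\pr{\FA^{\tt T}_i(p)}$ now contains, in addition to $\pr{\FA_i(p)}$, the variables G-read by the guards $G_j(p)$ with $\FA_j \prioA \FA_i$. For a variable $q.v$ in the old part $\pr{\FA_i(p)}$ the property of $\FA_i$ directly applies. For a variable $q.v \in \pr{\FA_j(p)}$ with $\FA_j \prioA \FA_i$ that is not already in $\pr{\FA_i(p)}$, I would argue that the hypothesis $q.v \in \w{\FA^{\tt T}_i(q)} = \w{\FA_i(q)}$ cannot hold: combined with $q \in p.\neigh \cup \{p\}$ (because $p$ G-reads $q.v$), it yields $\FA_i \precA \FA_j$ by definition of $\precA$, hence $\FA_i \prioA \FA_j$ by compatibility of $\prioA$ with $\precA$, contradicting $\FA_j \prioA \FA_i$ since $\prioA$ is strict. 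So the required implication holds vacuously for the new reads, and the bottom-up case is symmetric.
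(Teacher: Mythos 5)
Your proposal is correct and follows essentially the same route as the paper's proof: the same three-part decomposition (correct-alone via $S^{\tt T}_i=S_i$ and $\neg G_i \Rightarrow \neg G^{\tt T}_i$; top-down/bottom-up preserved by splitting $\pr{\FA^{\tt T}_i(p)}$ into the old reads and the negative-part reads and deriving the contradiction $\FA_i \prioA \FA_j$ versus $\FA_j \prioA \FA_i$; acyclicity from Lemma~\ref{lem:ordre} and the strict total order $\prioA$). The only difference is cosmetic: you spell out the correct-alone step in more detail than the paper does.
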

\begin{proof}

 Let $\FA_i^{\tt T}$ be a family
  of $\trans{\mathcal A}$. The lemma is immediate from the following
  three claims.

\begin{description}
\item[Claim I:] {\em $\FA_i^{\tt T}$ is correct-alone.}

  \noindent {\em Proof of the claim:} Since $\mathcal{A}$ follows an acyclic strategy, $\FA_i$ is correct-alone. Moreover,
  for every process $p$, we have $S_i^{\tt T}(p) = S_i(p)$ and $\neg G_i(p)
  \Rightarrow \neg G_i^{\tt T}(p)$. Hence, 
  $\FA_i^{\tt T}$ is also correct-alone.

\item[Claim II:] {\em $\FA_i^{\tt T}$ is either bottom-up or top-down.}

  \noindent {\em Proof of the claim:} Since $\mathcal{A}$ follows
  an acyclic strategy, $\FA_i$ is either bottom-up or top-down.
  Assume $\FA_i$ is bottom-up.  By construction, for every process
  $q$, $S_i^{\tt T}(q) = S_i(q)$, which implies that $\w{\FA^{\tt
      T}_i(q)} = \w{\FA_i(q)}$. Let $q.v \in \pr{\FA^{\tt
      T}_i(p)}$. 

  \begin{itemize}
\item Assume $q.v \in \pr{\FA_i(p)}$. Then $q.v\in\w{\FA_i(q)} \Rightarrow
  q \in p.children \cup \{p\}$ (since $\FA_i$ is bottom-up), {\em i.e.},
  $q.v\in\w{\FA^{\tt T}_i(q)} \Rightarrow q \in p.children \cup
  \{p\}$.

\item Assume now that $q.v \notin \pr{\FA_i(p)}$. Then
  $q.v \in \pr{\FA_j(p)}$ such that $G_j(p)$ belongs to the negative
  part of $G^{{\tt T}}_i(p)$, {\em i.e.}, $\FA_j \prioA \FA_i$ (Remark
  \ref{rem:cons}). Assume, by the contradiction, that
  $q.v \in\w{\FA^{\tt T}_i(q)}$. Then $q.v \in\w{\FA_i(q)}$, and since
  $p\in q.\neigh \cup \{q\}$ (indeed, $q.v \in \pr{\FA_j(p)}$), we
  have $\FA_i \precA \FA_j$. Now, as $\prioA$ is compatible with
  $\precA$, we have $\FA_i \prioA \FA_j$. Hence, $\FA_j \prioA \FA_i$
  and $\FA_i \prioA \FA_j$, a contradiction. Thus,
  $q.v \notin\w{\FA^{\tt T}_i(q)}$ which implies that
  $q.v\in\w{\FA^{\tt T}_i(q)} \Rightarrow q \in p.children \cup \{p\}$
  holds in this case.
\end{itemize}
Hence, $\FA_i^{\tt T}$ is bottom-up. 

Following a similar reasoning, if $\FA_i$ is top-down,
we can show $\FA_i^{\tt T}$ is top-down too.

\item[Claim III:] {\em The graph of
  actions' causality of $\trans{\mathcal A}$ is acyclic.}

\noindent {\em Proof of the claim:}
By Lemma~\ref{lem:ordre}, for every $i,j \in \{1, ..., k\}$, $\FA^{\tt
  T}_j \prec_{\trans{\mathcal A}} \FA^{\tt T}_i \Rightarrow \FA_j \prioA
\FA_i$. Now, $\prioA$ is a strict total order. So, the graph of
actions' causality of $\trans{\mathcal A}$ is acyclic.
\end{description}
\end{proof}

\begin{lemma}\label{sameexec}
  Every execution of $\trans{\mathcal A}$ is an execution of $\mathcal
  A$.
\end{lemma}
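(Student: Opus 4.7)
The plan is to establish two things about any execution $e = \gamma_0 \gamma_1 \ldots$ of $\trans{\mathcal A}$: (a) each individual step $\gamma_i \mapsto \gamma_{i+1}$ is a valid step of $\mathcal A$, and (b) if $e$ is finite, its last configuration is also terminal for $\mathcal A$. Since $\trans{\mathcal A}$ and $\mathcal A$ operate on the same variables (hence share the same configuration space $\mathcal C$), these two facts together will imply that $e$ is a maximal sequence of $\mathcal A$-steps, i.e., an execution of $\mathcal A$.

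For (a), fix a step $\gamma_i \mapsto \gamma_{i+1}$ of $\trans{\mathcal A}$ and a process $p$ selected at that step, executing some $\FA^{\tt T}_i(p)$. By construction, $G_i^{\tt T}(p) = \bigl(\bigwedge_{\FA_j \prioA \FA_i} \neg G_j(p)\bigr) \wedge G_i(p)$, so $G_i^{\tt T}(p)$ holding in $\gamma_i$ implies $G_i(p)$ holds in $\gamma_i$; thus $\FA_i(p)$ is enabled in $\gamma_i$ under $\mathcal A$. Since $S_i^{\tt T}(p) = S_i(p)$, executing $\FA^{\tt T}_i(p)$ produces exactly the same new state of $p$ as $\FA_i(p)$ would. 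Non-selected processes keep their state unchanged, so $\gamma_i \mapsto \gamma_{i+1}$ is indeed a valid step of $\mathcal A$.

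For (b), I would show the contrapositive: if $\gamma$ is not terminal for $\mathcal A$, then $\gamma$ is not terminal for $\trans{\mathcal A}$. Suppose some action $\FA_i(p)$ is enabled in $\gamma$ under $\mathcal A$. Among the indices $i$ such that $\FA_i(p)$ is enabled in $\gamma$, pick the one that is maximal with respect to the strict total order $\prioA$; call it $i^*$. Then for every $\FA_j$ with $\FA_j \prioA \FA_{i^*}$, $G_j(p)$ does not hold in $\gamma$ (otherwise $\FA_j(p)$ would be enabled and strictly $\prioA$-greater than $\FA_{i^*}$, contradicting maximality). Combined with $G_{i^*}(p)$ holding, this gives $G_{i^*}^{\tt T}(p)$, so $\FA_{i^*}^{\tt T}(p)$ is enabled in $\gamma$ and $\gamma$ is not terminal for $\trans{\mathcal A}$.

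Putting these together: if $e$ is infinite, (a) already certifies $e$ as an infinite, hence maximal, execution of $\mathcal A$. If $e$ is finite, it ends at some terminal configuration $\gamma_t$ of $\trans{\mathcal A}$, and (b) then yields that $\gamma_t$ is terminal for $\mathcal A$ too, so $e$ is a maximal finite execution of $\mathcal A$. The one subtle point, which I expect to be the main obstacle, is exactly step (b) — one must rule out the possibility that the added negative guards silently disable all actions at some process while $\mathcal A$ still has enabled actions there; the maximality argument using that $\prioA$ is a strict total order handles this cleanly.
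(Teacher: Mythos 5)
Your proposal is correct and follows essentially the same decomposition as the paper: same configuration space, every step of $\trans{\mathcal A}$ is a step of $\mathcal A$ because $G_i^{\tt T}(p)$ implies $G_i(p)$ and $S_i^{\tt T}(p)=S_i(p)$, and terminal configurations coincide. Your explicit argument for point (b) --- picking the $\prioA$-maximal enabled family at $p$ so that all negative guards are falsified --- is in fact the justification that the paper leaves implicit in the last equality of its Boolean manipulation $\bigwedge_{p,i}\bigl(\bigl(\bigvee_{\FA_j \prioA \FA_i} G_j(p)\bigr)\vee\neg G_i(p)\bigr)=\bigwedge_{p,i}\neg G_i(p)$, so if anything your write-up is slightly more complete on that point.
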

\begin{proof}
  The lemma is immediate from the following three claims.
\begin{description}
\item[Claim I:] {\em $\mathcal A$ and $\trans{\mathcal A}$ have the same set of
  configurations.}

\noindent {\em Proof of the claim:} By definition.

 \item[Claim II:] {\em Every step of $\trans{\mathcal A}$ is a step of $\mathcal
  A$. }

\noindent {\em Proof of the claim:} $G_i(p)$ is the guard of
$\FA_i(p)$ and the positive part of $G^{\tt T}_i(p)$. So, $G^{{\tt
    T}}_i(p)$ implies $G_i(p)$, {\em i.e.}, if $\FA^{\tt T}_i(p)$ is enabled,
then $\FA_i(p)$ is enabled. Since $S_i^{\tt T}(p) = S_i(p)$, we are
done.

 \item[Claim III:] {\em Let $\gamma$ be any configuration. $\gamma$ is terminal
  {\em w.r.t.}  $\trans{\mathcal A}$ if and only if $\gamma$ is
  terminal {\em w.r.t.}  $\mathcal A$.}

  \noindent {\em Proof of the claim:} 
 $\gamma$ is  terminal
  {\em w.r.t.}  $\trans{\mathcal A}$ if and only if
\begin{align*}
 \bigwedge_{p \in V}\bigwedge_{i \in \{1, ..., k\}} \neg G_i^{\tt T}(p)&=  \bigwedge_{p \in V}\bigwedge_{i \in \{1, ..., k\}} \neg \big(\big(\bigwedge_{\FA_j \prioA \FA_i} \neg G_j(p)\big) \wedge G_i(p)\big)\\
                                   &= \bigwedge_{p \in V}\bigwedge_{i \in \{1, ..., k\}} \big(\big(\bigvee_{\FA_j \prioA \FA_i} G_j(p)\big) \vee \neg G_i(p)\big)\\ 
                                   &= \bigwedge_{p \in V}\bigwedge_{i \in \{1, ..., k\}} \neg G_i(p)
\end{align*}
Now, $\bigwedge_{p \in V}\bigwedge_{i \in \{1, ..., k\}} \neg G_i(p)$ if
and only if $\gamma$ is terminal {\em w.r.t.}  $\mathcal A$.

% Assume, by the contradiction,
%   that $\gamma$ is terminal {\em w.r.t.}  $\trans{\mathcal A}$, but
%   not a terminal configuration of $\mathcal A$.  Let $\FA_i(p)$ be an
%   enabled action in $\gamma$ that is minimum according to $\prioA$.
%   $\FA_i(p)$ being minimum, in $\gamma$, $G_i(p)$ is true and, for every
%   $j \in \{1, ..., k\}$ such that $\FA_j \prioA \FA_i$, $G_j(p)$ is
%   false. Thus, $G^{\tt T}_i(p)$ is true, {\em i.e.}, $\FA^{\tt T}_i(p)$
%   is enabled in $\gamma$, a contradiction.
  
%   If $\gamma$ is terminal {\em w.r.t.}  $\mathcal A$, we have $\forall
%   p \in V, \forall i \in \{1, ..., k\}$, $\neg G_i(p)$, hence the
%   positive part of every guard of every action of $\trans{\mathcal A}$
%   is false in $\gamma$, which implies that $\gamma$ is a terminal
%   configuration of $\trans{\mathcal A}$.
\end{description}
\end{proof}

\begin{theorem}
   Let $\mathcal{A}$ be a distributed algorithm for a network $G$
  endowed with a spanning forest, $SP$ a predicate over the
  configurations of $\mathcal{A}$.
  If $\mathcal{A}$ follows an acyclic strategy, and is silent
  and self-stabili\-zing for~$SP$ in $G$ under the distributed unfair
  daemon, then
\begin{enumerate}
\item $\trans{\mathcal A}$ is silent and self-stabili\-zing for~$SP$
  in $G$ under the distributed unfair daemon, 
\item its stabilization time
  is  at most $(\HGC+1).(\HG+1)$ rounds, and
\item  its stabilization time in moves is less than or equal to
  the one of  $\mathcal{A}$.  
\end{enumerate}
 where $\HGC$ is the height of the graph of actions' causality
  $\GC$ of $\mathcal{A}$ and $\HG$ is the height of the spanning forest
  in $G$.
\end{theorem}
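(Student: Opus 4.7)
The plan is to stitch together the structural lemmas already proved for $\trans{\mathcal A}$ (Lemmas~\ref{lem:lme}, \ref{lem:as}, and~\ref{sameexec}) and feed them into Corollary~\ref{coro:rounds}. First, I would establish the preliminary fact that every terminal configuration of $\trans{\mathcal A}$ satisfies $SP$. By Claim~III in the proof of Lemma~\ref{sameexec}, $\mathcal A$ and $\trans{\mathcal A}$ have identical terminal configurations; since $\mathcal A$ is silent and self-stabilizing for $SP$, every terminal configuration of $\mathcal A$ satisfies $SP$, and hence so does every terminal configuration of $\trans{\mathcal A}$.

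Items~(1) and~(2) then follow by applying Corollary~\ref{coro:rounds} to $\trans{\mathcal A}$: its three hypotheses are supplied respectively by Lemma~\ref{lem:as} (acyclic strategy), Lemma~\ref{lem:lme} (local mutual exclusivity), and the preceding observation on terminal configurations. The corollary directly yields silent self-stabilization for $SP$ under the distributed unfair daemon, together with a round bound of the form $(\HGC+1)(\HG+1)$. For Item~(3), I would invoke Claim~II in the proof of Lemma~\ref{sameexec}: every step of $\trans{\mathcal A}$ is also a step of $\mathcal A$, so every execution of $\trans{\mathcal A}$ is in particular an execution of $\mathcal A$ performing exactly the same sequence of moves; hence the stabilization time in moves of $\trans{\mathcal A}$ is at most that of $\mathcal A$.

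The most delicate point I anticipate is interpreting the $\HGC$ appearing in Item~(2): Corollary~\ref{coro:rounds} naturally produces the bound in terms of the height of the causality graph of the algorithm to which it is applied, namely $\trans{\mathcal A}$. Lemma~\ref{lem:ordre} constrains $\prec_{\trans{\mathcal A}} \subseteq \prioA$, which already keeps this height finite and well controlled by the structure of $\prioA$; using the compatibility of $\prioA$ with $\precA$, a short additional argument mapping chains in $\GC_{\trans{\mathcal A}}$ back to chains involving $\precA$ would justify bounding this height by the height $\HGC$ of $\GC$ of $\mathcal A$ as literally written in the theorem.
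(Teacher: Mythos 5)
Your overall route coincides with the paper's: items (1) and (3) are obtained from Lemma~\ref{sameexec} (identical terminal configurations, and every execution of $\trans{\mathcal A}$ being an execution of $\mathcal A$ with the same moves), and item (2) from Lemmas~\ref{lem:lme} and~\ref{lem:as} fed into Theorem~\ref{theo:rounds} (equivalently Corollary~\ref{coro:rounds}). The paper's own proof is exactly this assembly, stated in two sentences.

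The one place where you go beyond the paper is also where your argument breaks. Theorem~\ref{theo:rounds} applied to $\trans{\mathcal A}$ yields the bound $(\HGC^{\tt T}+1).(\HG+1)$, where $\HGC^{\tt T}$ is the height of the causality graph of $\trans{\mathcal A}$, and you propose to dominate $\HGC^{\tt T}$ by the height $\HGC$ of $\GC$ of $\mathcal A$ by ``mapping chains of $\prec_{\trans{\mathcal A}}$ back to chains involving $\precA$.'' No such mapping exists: Lemma~\ref{lem:ordre} only embeds $\prec_{\trans{\mathcal A}}$ into the total order $\prioA$, which is a linear \emph{extension} of $\precA$, so a $\prec_{\trans{\mathcal A}}$-chain need not project to a $\precA$-chain. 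Worse, Lemma~\ref{lem:k} shows that in the generic case ($G_i(p)\not\equiv false$) $\prec_{\trans{\mathcal A}}$ coincides with $\prioA$, whence $\HGC^{\tt T}=k-1$; since $\HGC\le k-1$ with equality only when $\GC$ contains a chain through all $k$ families, $\HGC^{\tt T}$ can strictly exceed $\HGC$ (take $k$ families with $\GC$ edgeless: $\HGC=0$ but $\HGC^{\tt T}=k-1$). What the machinery actually delivers for item (2) is therefore $k.(\HG+1)$ rounds, not $(\HGC+1).(\HG+1)$ with $\HGC$ taken in $\mathcal A$ as the theorem literally states. To be fair, the paper's proof silently elides the same point, and nothing is lost asymptotically since both bounds are $O(k.\HG)$; but the ``short additional argument'' you invoke cannot be carried out as described, so either the bound must be restated with $\HGC^{\tt T}$ (or $k-1$) in place of $\HGC$, or a genuinely different round analysis of $\trans{\mathcal A}$ is needed.
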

\begin{proof}
  {\em (1)} and {\em (3)} are consequences of Lemma~\ref{sameexec} and
  Corollary~\ref{coro:rounds}. {\em (2)} follows from Lemmas
  \ref{lem:lme}, \ref{lem:as}, and Theorem \ref{theo:rounds}.
\end{proof}

Using the above theorem, our toy example $\mathcal{TE}$ stabilizes in
at most $2(\HG+1)$ rounds, keeping a move complexity in
$\Theta(\HG.n^2)$ in the worst case (recall that the worst-case
execution of $\mathcal{TE}$ proposed in Subsection
\ref{sec:lower-bound-moves} is also a possible execution of
$\trans{\mathcal{TE}}$).

The next lemma shows that in usual cases, the height $\HGC$ of graph of
actions' causality of $\trans{\mathcal A}$ satisfies $\HGC = k-1$, where
$k$ is the number of families of $\mathcal A$.

\begin{lemma}\label{lem:k}
  If for every $i \in \{1, ..., k\}$ and every $p \in V$, $G_i(p)
  \not\equiv false$, then
\begin{center}
for every $x,y \in \{1, ..., k\}$,
  $\FA^{\tt T}_x \prec_{\trans{\mathcal A}} \FA^{\tt T}_y$ if and only
  if $\FA_x \prioA \FA_y$.
\end{center}
Consequently, the height $\HGC$ of graph
  of actions' causality of $\trans{\mathcal A}$ satisfies $\HGC = k-1$
   (indeed
  $\prioA$ is a strict total order).
\end{lemma}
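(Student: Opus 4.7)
The plan is to establish the biconditional by splitting into its two implications and then to read off the height statement as a direct consequence. The forward direction $(\Rightarrow)$ is already provided by Lemma~\ref{lem:ordre}, which does not even require the non-triviality hypothesis, so all the work is in the converse.

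For the converse, fix $x,y\in\{1,\dots,k\}$ with $\FA_x\prioA\FA_y$; since $\prioA$ is irreflexive, $x\neq y$. I will look for processes $p,q$ with $q\in p.\neigh\cup\{p\}$ and $\w{\FA^{\tt T}_x(p)}\cap\pr{\FA^{\tt T}_y(q)}\neq\emptyset$, which by definition of $\prec_{\trans{\mathcal A}}$ yields $\FA^{\tt T}_x\prec_{\trans{\mathcal A}}\FA^{\tt T}_y$. I take $q=p$: Remark~\ref{rem:cons} ensures that $G_x(p)$ appears in the negative part of $G^{\tt T}_y(p)$, and hence $\pr{\FA_x(p)}\subseteq\pr{\FA^{\tt T}_y(p)}$. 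Since $\w{\FA^{\tt T}_x(p)}=\w{\FA_x(p)}$ by construction of $\trans{\mathcal A}$, it suffices to exhibit some $p$ such that $\w{\FA_x(p)}\cap\pr{\FA_x(p)}\neq\emptyset$.

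The main obstacle, and the only place where both the non-triviality hypothesis and the correct-alone property of $\FA_x$ come into play, is precisely this non-emptiness claim. I will argue it by contradiction: assume $\w{\FA_x(p)}\cap\pr{\FA_x(p)}=\emptyset$ for some $p$. Since $G_x(p)\not\equiv false$, I can pick a configuration $\gamma$ in which $G_x(p)$ holds; the distributed unfair daemon is free to select $p$ alone, and $p$ may execute $\FA_x(p)$, yielding a step $\gamma\mapsto\gamma'$ whose only modified variables lie in $\w{\FA_x(p)}$. Under the contradiction hypothesis none of these lie in $\pr{\FA_x(p)}$, so $G_x(p)$ still evaluates to true in $\gamma'$. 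This contradicts correct-alone: its premise ``no variable in $\pr{\FA_x(p)}\setminus\w{\FA_x(p)}$ is modified'' is satisfied (that set equals $\pr{\FA_x(p)}$ and nothing in it changed), so correct-alone would force $\FA_x(p)$ to be disabled in $\gamma'$.

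For the final assertion on the height, once the biconditional is proven, $\prec_{\trans{\mathcal A}}$ coincides on families with the strict total order $\prioA$. The graph of actions' causality of $\trans{\mathcal A}$ is therefore a chain; enumerating the families as $\FA^{\tt T}_{\sigma(1)}\prioA\dots\prioA\FA^{\tt T}_{\sigma(k)}$ and using the inductive definition of height from Section~\ref{sect:move}, a routine induction on $i$ gives $\HGC(\FA^{\tt T}_{\sigma(i)})=i-1$, whence $\HGC=k-1$.
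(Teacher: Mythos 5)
Your proof is correct and follows essentially the same route as the paper: forward direction by Lemma~\ref{lem:ordre}, and for the converse, reducing via Remark~\ref{rem:cons} and $\w{\FA^{\tt T}_x(p)}=\w{\FA_x(p)}$ to the claim $\w{\FA_x(p)}\cap\pr{\FA_x(p)}\neq\emptyset$. The only difference is that you spell out the contradiction argument (execute $\FA_x(p)$ alone from a configuration where its guard holds and invoke correct-alone) for that claim, which the paper asserts in a single sentence.
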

\begin{proof}
  Let $x,y \in \{1, ..., k\}$.  By Lemma~\ref{lem:ordre}, $\FA^{\tt
    T}_x \prec_{\trans{\mathcal A}} \FA^{\tt T}_y \Rightarrow \FA_x
  \prioA \FA_y$.
  Assume now that $\FA_x \prioA \FA_y$.  By irreflexivity, $x \neq
  y$. Let $p$ be any process. Since $\FA_x$ is correct-alone and $G_x(p)
  \not\equiv false$, $\w{\FA_x(p)} \cap \pr{\FA_x(p)} \neq
  \emptyset$. Now, $G_x(p)$ belongs to the negative part of $\FA^{\tt
    T}_y$ (Remark~\ref{rem:cons}) and $\w{\FA^{{\tt T}}_x(p)} =
  \w{\FA_x(p)}$. So, $\w{\FA^{{\tt T}}_x(p)} \cap \pr{\FA^{\tt
      T}_y(p)} \neq \emptyset$. Hence, $\FA^{\tt T}_x
  \prec_{\trans{\mathcal A}} \FA^{\tt T}_y$, and we are done.
\end{proof}

\section{Related Work and Applications}\label{sect:appli}

In this section, we review some existing works from the literature and
show how to apply our generic results on them.
Those works propose silent self-stabilizing algorithms for directed
trees or network where a directed spanning tree is available. These algorithms are, or can be easily translated into, well-formed
algorithms that follow an acyclic strategy.
Hence, their correctness and time complexities (in moves and rounds)
are directly deduced from our results.

\paragraph{A Distributed Algorithm for Minimum Distance-$k$ Domination
  in Trees \cite{kdomset-turau}.}

This paper proposes three algorithms for directed trees. Each algorithm
is given with its proof of correctness and round complexity, however
move complexity is not considered.  Our results allow to obtain the
same round complexities, and additionally provide move complexities.

The first algorithm converges to a legitimate terminal configuration where a minimum distance-$k$ dominating
set is defined. 
%The fact that the algorithm actually computes a minimum
%distance-$k$ dominating set is proved in the paper as a dedicated
%property of a terminal configuration.
%
This algorithm can be trivially translated in our model as an
algorithm with a single variable and a single action at each process
$p$,
$$
\FA_1(p) :: p.L \neq \mathcal{L}(p) \to p.L \gets \mathcal{L}(p)
$$
We do not explain here the algorithm, for the role of variable $L$ and
its computation using $\mathcal{L}(p)$, please refer to the original
paper \cite{kdomset-turau}.  Now, from the definition of $\mathcal{L}$
in \cite{kdomset-turau}, we know that $\mathcal{L}(p)$ depends on
$q.L$ for $q\in p.children$; hence the family $\FA_1$ is bottom-up and
correct-alone. Thus, we can deduce from our results that the
translation of this algorithm in our model is silent and
self-stabilizing with a stabilization time in $O(\HG)$ rounds
(Theorem~\ref{theo:rounds}) and $O(n^2)$ moves
(Theorem~\ref{theo:silentself}) where $\HG$ is the height of the tree
and $n$ is the number of processes.

The second algorithm is an extension of the first one since it computes both a minimum distance-$k$ dominating set
and a maximum distance-$2k$ independent set. This algorithm is made of two
families of actions $\FA_1$ and $\FA_2$: for every node $p$,
$$
\FA_1(p) :: p.L \neq \mathcal{L}(p) \to p.L \gets \mathcal{L}(p)
$$
$$
\FA_2(p) :: p.fading \neq fading(p) \to p.fading \gets fading(p)
$$
We already know that $\FA_1$ is bottom-up and correct-alone.  Then,
from the definitions given in \cite{kdomset-turau}, we can easily
deduce that $\FA_2$ is top-down and correct-alone since $fading(p)$
depends on $p.parent.fading$ and $q.L$ with $q \in p.children$,
which is not written by the family $\FA_2$. Hence, the graph
of actions' causality is $$\FA_1\longrightarrow\FA_2$$ Thus, we obtain a
stabilization time of $O(\HG)$ rounds (as in~\cite{kdomset-turau}),
but additionally we obtain a move complexity in $O(n^2.\HG)$.

The third algorithm computes minimum connected distance-$k$ dominating
sets using five families of actions $\FA_1, ..., \FA_5$:
$$
\FA_1(p) :: p.L \neq \mathcal{L}(p) \to p.L \gets \mathcal{L}(p)
$$
$$
\FA_2(p) :: p.level \neq level(p) \to p.level \gets level(p)
$$
$$
\FA_3(p) :: p.cds \neq cds'(p) \to p.cds \gets cds'(p)
$$
$$
\FA_4(p) :: p.cds \wedge p.dist_l \neq dist_l(p) \to p.dist_l \gets dist_l(p)
$$
$$
\FA_5(p) :: p.minc \neq minc(p) \to p.minc \gets minc(p)
$$
From \cite{kdomset-turau}:
\begin{itemize}
\item $\mathcal{L}(p)$ depends on $q.L$ for $q\in p.children$,
\item $level(p)$ depends on $p.parent.level$,
\item we note $cds'(p) = \texttt{if } p.L = k \texttt{ then } true
  \texttt{ else } cds(p)$ and $cds(p)$ depends on $q.cds$ for $q\in
  p.children$,
\item $dist_l(p)$ depends on $q.L$ and $q.cds$ for $q\in p.children$,
  and $p.parent.dist_l$,
\item $minc(p)$ depends on $p.level$, $q.cds$ and $q.minc$ for $q\in
  p.children$.
\end{itemize}
Hence $\FA_1, \FA_3, \FA_5$ are bottom-up and correct-alone and
$\FA_2, \FA_4$ are top-down and correct-alone. The graph of actions'
causality is acyclic since $\FA_1\prec\FA_3$, $\FA_1\prec\FA_4$,
$\FA_2\prec\FA_5$, $\FA_3\prec\FA_4$, and $\FA_3\prec\FA_5$; and its
height is $\HGC = 2$. Thus, conformly to~\cite{kdomset-turau}, we
obtain a round complexity in $O(\HG)$. Moreover,
Theorem~\ref{theo:silentself} provides a move complexity in
$O(\Delta^2.n^4)$ (with $\Delta$ the degree of the tree).

\paragraph{Self-stabilizing Tree Ranking
  \cite{chaudhuri-tree-ranking}.}

In this paper, the authors propose a silent self-stabilizing algorithm
that works on a directed tree and computes various rankings of the
processes following several kind of tree traversals such as pre-order
or breadth-first traversal, assuming a central unfair daemon. They
assume that each node knows a predefined order on its children so that
the traversal ordering is deterministic.

Following our method, the proposed algorithm is made of six families of
actions.
\begin{itemize}
\item $\FA_1$ computes $D$, the number of proper descendants of the
  process, it also copies the number of proper descendants of each of
  children of the process. This family is bottom-up and correct-alone.
\item $\FA_2$ computes $L$, the level of the node. This family is
  top-down and correct-alone.
\item $\FA_3$ computes $PRE$, the preorder rank of the node. The value
  of $PRE$ depends on values computed by $\FA_1$, so $\FA_3$ is
  top-down and correct-alone. $\FA_3$ also computes $LABEL$, which is
  an intermediate labelling used for breadth-first ranking, $LABEL$
  directly depends on the values of $L$ and $PRE$ of the process.
\item $\FA_4$ computes $POST$ and $RPOST$, the postorder and preorder
  ranks of the process. The values of $POST$ and $RPOST$ depend on
  values computed by $\FA_1$, so $\FA_4$ is top-down and
  correct-alone.
\item $\FA_5$ computes $DLIST$ (an intermediate list of nodes for
  breadth-first ranking) in a bottom-up and correct-alone manner,
  since the value of $DLIST$ depends on $LABEL$ at the node and
  $DLIST$ at the children of the process.
\item $\FA_6$ computes $RLIST$ (the list of all nodes in the
  breadth-first order) and $BFR$ the breadth-first rank of the
  process.  $\FA_6$ is top-down and correct-alone since the values
  written by $\FA_6$ depend on $RLIST$ at the process and its parent
  as well as $DLIST$ and $LABEL$ at the process.
\end{itemize}
In \cite{chaudhuri-tree-ranking}, the authors divide the algorithm in
two phases: in phase 1, actions $\FA_1$ and $\FA_2$ are executed and
converge and then, after global termination of phase 1, phase 2 begins
with the other actions from $\FA_3$ to $\FA_6$. But our results
apply: there is no need to separate those two phases and the full
algorithm is silent and self-stabilizing under an unfair distributed
daemon. Note that the fact that after termination, we have correct
tree rankings is proven in the original paper. Moreover, note that we
extend this result since it has been proven for a central unfair
daemon only.  For round complexity, we obtain $O(\HG)$ rounds, like in
the original paper. For move complexity, we obtain $O(\Delta^3.n^5)$
moves using an unfair distributed daemon, while the authors obtain
$O(n^2)$ moves using an unfair central daemon and assuming the
computation is divided in two separated phases. Note that the overhead
we obtain is not surprising since centrality and phase separation
remove any interleaving.

\paragraph{Improved Self-Stabilizing Algorithms for L(2, 1)-Labeling
  Tree Networks \cite{chaudhuri-labelling}.}

In \cite{chaudhuri-labelling}, the authors propose two silent
self-stabilizing algorithms for computing a particular labelling in
directed trees.  Although more simple, their solutions follows the
same ideas as in \cite{chaudhuri-tree-ranking}: each algorithm
contains a single family of actions which is correct-alone and
top-down. We obtain the same bounds as
in~\cite{chaudhuri-tree-ranking}, namely the two algorithms are silent
self-stabilizing under a distributed unfair daemon and converge within
$O(n.\HG)$ moves and $O(\HG)$ rounds.

\paragraph{An $O(n^2)$ Self-Stabilizing Algorithm for Computing
  Bridge-Connected Components \cite{chaudhuri-bridge}}

In \cite{chaudhuri-bridge}, an algorithm is proposed to compute
bridge-connected components in a network endowed with a  depth-first spanning
tree. The algorithm is
proven to be silent self-stabilizing under an unfair distributed
daemon. However, as in \cite{chaudhuri-tree-ranking}, it is separated into
two phases, the first phase has to be finished, globally, before the
second phase begins. The first phase corresponds to one
family of actions (that computes variable $S$) which are correct-alone
and bottom-up, while the second phase corresponds to a second family of
actions (to compute variable $BCC$) which is correct-alone and top-down. Our
results show the correctness of the algorithm without enforcing those
phases, with a stabilization time in $O(n^3)$ moves and $O(\HG)$
rounds, respectively. 
Note that the original paper does not provide the round complexity and
obtains $O(n^2)$ moves in case the two phases are executed in sequence
without any interleaving.

\paragraph{A Note on Self-Stabilizing Articulation Point Detection
  \cite{Chaudhuri-articulation}.}

This paper proposes a silent self-stabilizing solution for
articulation point detection  in a network endowed with a  depth-first spanning
tree. The algorithm is
exactly the first phase of \cite{chaudhuri-bridge}, {\em i.e.}, a
single family of correct-alone and bottom-up actions. It converges in
at most $O(n^2)$ moves and $O(\HG)$ rounds under an unfair distributed
daemon.

\paragraph{A Self-Stabilizing Algorithm for Finding Articulation
  Points \cite{Karaata-articulation}.}

The silent algorithm given in \cite{Karaata-articulation} finds
articulation points in a network endowed with a breadth-first spanning
tree, assuming a central unfair daemon. The algorithm computes for
each node $p$ the variable $p.e$ which contains every non-tree edges
incident on $p$ and some non-tree edges incident on descendants of $p$
once a terminal configuration has been reached. Precisely, a non-tree
edge $\{p,q\}$ is propagated up in the tree starting from $p$ and $q$
until the first common ancestor of $p$ and $q$. Based on $p.e$, the
node $p$ can decide whether or not
it is an articulation point.  The algorithm can be translated in our
model as a single family of actions which is correct-alone and
bottom-up. From our results, it follows that this algorithm is
actually silent and self-stabilizing even assuming a distributed
unfair daemon. Moreover, its stabilization time is in $O(n^2)$ moves
and $O(\HG)$ rounds, respectively.

\paragraph{A Self-Stabilizing Algorithm for Bridge Finding
  \cite{Karaata-Chaudhuri-bridge}.}

The algorithm in \cite{Karaata-Chaudhuri-bridge} computes bridges in a
network endowed with a breadth-first spanning
tree, assuming a distributed
unfair daemon. As in \cite{Karaata-articulation}, the algorithm
computes a variable $p.s$ at each node $p$ using a single family which
is correct-alone and bottom-up.  The correctness of this algorithm
assuming a distributed unfair daemon is direct from our results.
Moreover, we obtain a stabilization time in $O(n^2)$ moves and
$O(\HG)$ rounds.

\paragraph{A Silent Self-stabilizing Algorithm for Finding Cut-Nodes
  and Bridges \cite{devismes-bridges}.}

The algorithm in \cite{devismes-bridges} computes cut-nodes and
bridges on connected graph endowed with a depth-first spanning
tree. It is silent and self-stabilizing under a distributed unfair
distributed daemon and converges within $O(n^2)$ moves and $O(\HG)$
rounds, respectively. Indeed, the algorithm contains a single family
of actions which is correct-alone and bottom-up.

\section{Conclusion}
\label{sec:conclusions}

We have presented a general scheme to prove and analyze silent
self-stabilizing algorithms running on networks endowed with a sense
of direction describing a spanning forest. Our results allow to easily
({\em i.e.} quasi-syntactically) deduce upper bounds on move and round
complexities of such algorithms. We have shown, using a toy example, that
our method allow to easily obtain tight complexity bounds,
precisely a stabilization time which is asymptotically optimal in
rounds and polynomial in moves.
Finally, we reviewed a number of existing silent self-stabilizing
solutions from the
literature~\cite{kdomset-turau,chaudhuri-tree-ranking,chaudhuri-labelling,chaudhuri-bridge,Chaudhuri-articulation,Karaata-articulation,devismes-bridges}
where our method applies.  In some of them, we were able to provide
more general results than those proven in the original papers. Namely,
some algorithms are proven using a strong daemon, whereas our work
extends to the most general daemon assumption, {\em i.e.}, the
distributed unfair daemon. Moreover, many papers only focus on one
kind of time complexity measure, whereas our results systematically
provide round as well as move complexities.

In many of those related works, the assumption about the existence of
a directed (spanning) tree in the network has to be considered as an
intermediate assumption, since this structure has to be built by an
underlying algorithm.  Now, there are several silent self-stabilizing
spanning tree constructions that are efficient in both rounds and
moves, {\em e.g.},~\cite{DIJ18tr}. Thus, both algorithms, {\em i.e.},
the one that builds the tree and the one that computes on this tree,
have to be carefully composed to obtain a general composite
algorithm where, the stabilization time is keeped both asymptotically
optimal in rounds and polynomial in moves.

\clearpage
\bibliographystyle{unsrt}
\bibliography{biblio}

% \clearpage
% \hrule

% \input{toy}
% \input{lower_round}

\end{document}